\definecolor{Yellow}{rgb}{1,1,0}
\newcommand\defeq{\stackrel{\mathrm{def}}{=}}
\newtheorem{theorem}{Theorem}
\newtheorem{corollary}[theorem]{Corollary}
\newtheorem{lemma}[theorem]{Lemma}
\newtheorem{definition}[theorem]{Definition}
\newcommand{\qed}{\nobreak \ifvmode \relax \else
      \ifdim\lastskip<1.5em \hskip-\lastskip
      \hskip1.5em plus0em minus0.5em \fi \nobreak
      \vrule height0.75em width0.5em depth0.25em\fi}
\DeclareMathOperator*{\argmax}{argmax}
\begin{document}
\title{Compressed Genotyping }
%\author{Y. Erlich, A. Gordon, M. Brand, G.J. Hannon, and P.P. Mitra}

%%%%%%%%%%%%%%IEEEE_CHAGE
 \author{Yaniv~Erlich, Assaf~Gordon, Michael Brand, Gregory J. Hannon and~Partha~P.~Mitra}

%%%%%%%%%%%%IEEE_CHANGE
\thanks{
 	Y. E, A.G, G.J.H, and P.P.M are with the Watson School of Biological Science, Cold Spring Harbor Laboratory, NY, 11724 USA 
	M.B. is with Lester Associates, Bentleigh East, 3165 Australia
 	}%
\thanks{Email addresses: Y.E is in erlich@cshl.edu, A.G is in gordon@cshl.edu, M.B is in ieee@brand.scso.com, G.J.H is in hannon@cshl.edu, and P.P.M is in mitra@cshl.edu}%
\thanks{Manuscript received May 17, 2009; revised Month dd, yyyy.}%

% The paper headers
\markboth{Submitted to Transaction on Information Theory - Special Issue on Molecular Biology and Neuroscience}%
%\markboth{Transaction of Information Theory,~Vol.~X, No.~X, Month~yyyy}%
{Shell \MakeLowercase{\textit{et al.}}:}
% The only time the second header will appear is for the odd numbered pages
% after the title page when using the twoside option.
% 
% *** Note that you probably will NOT want to include the author's ***
% *** name in the headers of peer review papers.                   ***
% You can use \ifCLASSOPTIONpeerreview for conditional compilation here if
% you desire.

%%%%%%%%%%%%%%%%%STOP_CHANGE
\maketitle
\begin{abstract}
Significant volumes of knowledge have been accumulated in recent 
years linking subtle genetic variations to a wide variety of medical disorders from Cystic Fibrosis to mental retardation. 
Nevertheless, there are still great challenges in
applying this knowledge routinely in the clinic, largely due to the relatively 
tedious and expensive process of DNA sequencing.
Since the genetic polymorphisms that underlie these disorders are 
relatively rare in the human population, the presence or absence of a disease-linked polymorphism can be thought of as a sparse signal. Using methods and ideas from compressed sensing and group testing, 
we have developed a cost-effective genotyping protocol. 
In particular, we have adapted our scheme to a recently developed class of 
high throughput DNA sequencing technologies, and assembled a mathematical 
framework that has some important distinctions from 'traditional'
compressed sensing ideas in order to address different biological and technical constraints.

\end{abstract}

\section{INTRODUCTION}

%%%%IEEE_CHANGE
Genotyping, the process of determining the genetic variation of a certain trait in an individual, 
has become a pivotal component of medical genetics, as a broad spectrum
of disorders are now known to be induced by non-functional genes.
In the past thirty years, extensive efforts were made to identify and locate risk
alleles of severe genetic diseases, which are characterized by incapacities or lethality
of the affected individuals at an early age, and very low prevalence in the population.
These efforts have not only led to deeper insights regarding the molecular
mechanisms that underlie those genetic disorders, but have also contributed
to the emergence of large scale genetic screens, where individuals are
genotyped for a panel of risk alleles in order to detect genetic disorders and
provide early intervention where possible.

Genetic diseases are broadly classified into two groups according to the
effect of the underlying mutation - either {\it dominant} or {\it recessive}.
This classification is elucidated by the diploidy of the human genome, 
meaning that each gene appears in two copies (except the X and Y chromosomes in male). Dominant 
mutation induces a disorder even when present only on one chromosome, 
whereas recessive mutation induces the disorder only if both copies are non-functional.
Thus, for a disease caused by a recessive mutation, individuals are classified into three groups : 
(a) {\it normal} if their two alleles are intact, (b) {\it carriers} if only one allele is functional 
(c) {\it affected} if their two alleles are non-functional. Table \ref{table_genotype_phentype} illustrates
this classification. A {\it carrier screen} is a genetic test for
detecting individuals that are heterozygous with respect to a risk allele of a severe genetic disease.
If two carriers bread, they have $25\%$ chance of having an affected offspring for each carriage, and
in some countries that face high prevalence of severe genetic diseases the
practice is to offer a screen to the entire population, regardless their familial history 
for early monitoring and prevention \cite{article_screen_in_other_places, article_screens_in_israel}.
Therefore, due to the importance and ubiquity in medical gentics and the intriguing connection
to the theory of sparse signal recovery, our work is focused on carrier screens.
However, large parts of the framework can be used also for other types of genetic screens.

The most common genotyping method is based on sequencing the regions that encompass 
the risk genes and analyzing the type of the DNA sequence - whether it matches the wild type (WT) 
alleles or a known risk allele. This approach gained popularity 
due to its high accuracy (sensitivity and specificy), 
applicability to a wide variety of genetic disorders, and technical simplicity.
However, the current DNA sequencing technologies used for genotyping 
provide only serial processing of one specimen/region combination at a time. This increases
the cost of labor and other expenses in large-scale screens and essentially deters 
individual participation and limits the panel of genes that are analyzed.

Recently, a new class of DNA sequencer methods, dubbed {\it next-generation sequencing technologies}
(NGST) has emerged, revolutionizing molecular biology and genomics 
(reviewed in \cite{article_metzker, article_the_year_of_sequencing, article_nature_biotech_sequencing_review}). 
These sequencers process the DNA fragments in parallel and 
provide millions of sequence reads in a single batch, each of which corresponds 
to a DNA molecule within the sample. While there are several types of NGST platforms and
different sets of sequencing reactions, all platforms achieve parallelization using a common 
concept of immobilizing the DNA fragments to a surface, so that each fragment occupies a distinct 
spatial position.  When the sequencing regants are applied 
to the surface, they generate optical signals dependent on
the DNA sequence, which are then captured by a microscope and processed. 
Since the fragments are immobilized, successive signals from the same spatial location 
convey the DNA sequence of the corresponding fragment (Fig. \ref{fig_dna_and_labels}a).
However, the spatial locations of the DNA fragments are 
completely random and are based on stochastic hybridization of a small
aliquot (millions) of DNA fragments out of significantly larger number present in a sample.
Therefore, if a DNA library is 
composed of multiple specimens, it is not possible to associate the sequence reads with 
their corresponding specimens. This limitation is the main obstacle to the utilization of
next generation sequencers in large scale screens, since dedicating a run to each 
specimen is not cost effective.

A simple solution to overcome the specimen-multiplexing problem is to append 
unique identifiers, dubbed {\it DNA barcodes}, to each specimen prior to sequencing
\cite{article_barcoding_1}\cite{article_barcoding_2}.
These barcodes are short DNA molecules that are artificially synthesized, and when attached
to the DNA fragment, they label it with a unique sequence. The sequencer reads the entire fragment,
and reports the sequence of the barcode with the sequence of the interrogated region.
By reading the portion of the sequence corresponding to the barcode, the experimenter can associate a genotyped fragment to a given specimen (Fig. \ref{fig_dna_and_labels}b).
While this method was found quite successful 
for genotyping several dozens of specimens, the synthesis and ligation 
of large number of DNA fragments is both cumbersome and expensive.
This restricts the scalability of the method for genetic screens that consist of
thousands of individuals. In fact, with the current costs of synthesizing so 
many DNA barcodes, it is more cost beneficial to use the legacy serial-based DNA sequencers.

Drawing inspiration from compressed sensing
\cite{article_compressed_sensing_background_1, article_compressed_sensing_background_2}, we ask:
{\it since only a small fraction of the population are
carriers of a severe genetic disease, can one employ a compressed genotyping protocol to identify
those individuals?} We suggest a protocol in which one genotypes pools of specimens on a next-generation
sequencing platform that would approach the sequencing capacity, while reducing the 
number of barcodes and maintaining a faithful detection of the carriers.

\begin{table}[!t]
\caption{Genotype-Phenotype Connections in Genetic Disorders}
\centering
\begin{tabular}{ | c || p{2.2cm} || p{1.4cm} || p{2.2cm} | }%
\hline
\bf{Alleles} & \bf{Genotype} & \bf{Dominant Disorder}  & \bf{Recessive Disorder} \\[1ex]
\hline
&&&\\
\bf{AA} & Homozygous WT & Normal    & Normal   \\[1ex]
\bf{Aa} & Heterozygous        & Affected  & Carrier   \\[1ex]
\bf{aa} & Homozygous mut. & Affected  & Affected  \\[1ex]
\hline
\multicolumn{3}{l} {A  - normal allele,  a  - mutant allele}
\end{tabular}
\label{table_genotype_phentype}
\end{table}

\begin{figure*}[!t]
\centering
\includegraphics[width=5.5in]{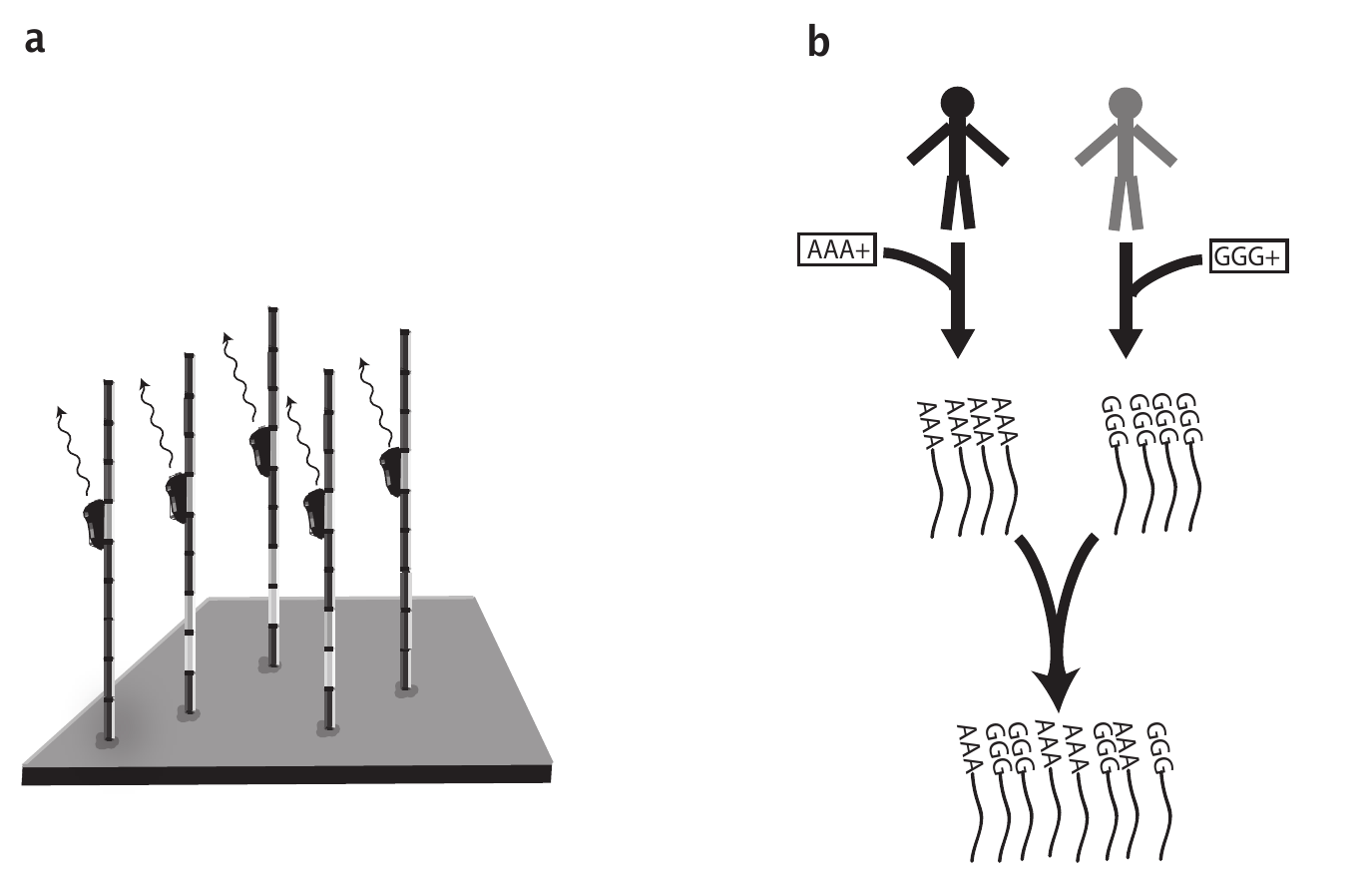}
\caption{Common Techniques in Next Generation Sequencing
(a) High throughput sequencing is employed by immobilizing DNA fragments (rods) on a slide,
and using sequencing reagents (black ovals) that generate optical signals. (b) DNA barcoding
is based on synthesizing short DNA sequences, in the example 'AAA' and 'GGG', and ligating
them to the samples in distinct reactions. When the samples are mixed the barcodes maintain
the identities of the specimens.}
\label{fig_dna_and_labels}
\end{figure*}

\subsection {Related work}
 Our work closely relates to group testing and compressed sensing, which
deals with efficient methods for extracting sparse information from a small 
number of aggregated measurements. Much of the group testing
literature (thoroughly reviewed in: \cite{book_group_testing_1, book_group_testing_2}) is dedicated 
to the {\it prototypical problem}, which describes a set of 
interrogated items that can be in an active state or an inactive state
and a test procedure, which is performed on pools of items, and returns 
'inactive' if all items in the pool are inactive, or 'active' if at least one of the items in the pool is active.
Mathematically, this type of test can be thought of as an OR operation over the items in the pool, and is
called {\it superimposition}\cite{article_kautz_singelton}. 
In general, there are two types of test schedules: adaptive schedules, in 
which items are analyzed in successive rounds and re-pooled from round to round according
to the accumulated results, and non-adaptive schedules where the items are pooled and tested
in a single round. While in theory adaptive schedules require less tests, in practice they are 
more labor intensive and time consuming due to the re-pooling steps and the need to wait for 
the test results from the previous round. For that reason, non-adaptive schedules are favored, 
and have been employed for several biological applications including 
finding sequencing tagged sites in yeast artificial chromosomes (YAC)\cite{article_bruno}
, and mapping protein interactions\cite{atricle_y2h}.

Compressed sensing\cite{article_compressed_sensing_background_1, article_compressed_sensing_background_2}
is a recently emerged signal processing technique that describes conditions 
and efficient methods for capturing signals that are sparse in some 
orthonormal representation by measuring a small number of linear projections. 
This theory extends the framework of group testing to the recovery of hidden variables that are real 
(or complex) numbers. Additonal deviation from group testing is that the aggregated measurements
reports the linear combination of the data points and not superimposition. 
However, some combinatorial concepts in group testing were found useful also for compressed sensing,
and it has been recently shown that deterministic designs based on group testing can confer
sublinear reconstruction runtimes for real signals
\cite{article_cormode,article_sudocodes,article_group_testing_compressed_sensing}.
The framework of compressed
sensing was also found useful for applications beyond 'traditional' signal processing
and recently a novel biological application was suggested - designing highly efficient microarrays that 
reduces the number of DNA probes, which is a factor hampering their miniaturization \cite{article_olgica_1,article_olgica_2}.

Our approach combines lessons from both fields but also has key differences from these frameworks.
The most obvious deviation of our
is that rather than focussing solely on maximal reduction in the number of {\it queries} (termed 'measurements' in
compressed sensing, or 'tests' in group testing) additional cost functions are introduced. 
Principally, we are interested in an additional objective of minimizing the weight of the design,
corresponding to the number of nonzero elements of the design matrix (for a fixed number of specimens).
This constraint originates from the properties of next generation sequencers, and prevents maximal query
reduction. We will discuss the consequences of this constraint and provide some theoretical bounds and
efficient designs. In particular, this theoretical framework is built on our recent experimental results regarding genotyping thousands of bacterial colonies using combinatorial pooling with NGSTs for a biotechnological application \cite{article_genome_research}. 
Prabhu et al.\cite{article_Itsik} develop a closely related theoretical approach to detect singletons using error correcting codes.
A somewhat similar compressed sensing approach has been independently developed by Shental et al.~\cite{article_Shental}.

The manuscript is divided as follows: In section \ref{section_setup}, we set up the basic formulation
of compressed genotyping. In section \ref{section_encoding}, we present the concept of light-weight
designs and provide a lower theoretical bound. Then, we show how constructions based on the Chinese 
Reminder Theorem comes close to this bound. In section \ref{section_decoding}, we 
present a Bayesian reconstruction approach based on belief propagation, and in section \ref{section_results}, we provide 
several simulations of carrier test, including Cystic Fibrosis. Section \ref{section_conclude} concludes the
manuscript.

\section{THE GENOTYPING PROBLEM - PRELIMINARIES}\label{section_setup}

\subsection{Notations}
We denote matices as an upper-case bold letter and the $(i,j)$ element of
the matrix $\mathbf{X}$ as $X_{ij}$. The shorthand $\overline{\mathbf{X}}$ denotes a matrix that its row
vectors are normalized and to sum to $1$. 
$\mathbb{I}(\mathbf{X})$ is an indicator function that returns a matrix in the same size as 
$\mathbf{X}$ with:
\begin{equation*}
	\mathbb{I}(X_{ij}) = 
 	\left\{ \begin{array}{rl}
			1 & X_{ij}>0 \\
			0 & X_{ij}=0 \\
		\end{array}\right.
\end{equation*}
For example: 
\begin{align*}
	\mathbf{X} & = \begin{bmatrix} 0 & 1 \\ 2 & 3\end{bmatrix} \\
	\overline{\mathbf{X}} & = \begin{bmatrix} 0 & 1 \\ 0.4 & 0.6\end{bmatrix} \\
	\mathbf{I}(\mathbf{X}) & = \begin{bmatrix} 0 & 1 \\ 1 & 1\end{bmatrix} \\
\end{align*}

The operation $|\cdot|$ denotes the cardinality of a set or the length of a vector.
For graphs, $\partial a$ refers to the subset of nodes that are connected to the node $a$, and 
the notation $\partial a \backslash b$ means the subset of nodes
that are connected to $a$ except of node $b$. 
We use natural logarithms.

\subsection{Genotyping As Bipartite Graph Reconstruction}
Consider a single human specimen that is being genotyped for a gene that has 
two alleles, labled by {\it A} and {\it a}.
We represent the genotype of this specimen by a vector of length two with three possible
outcomes: $(2,0)$ if the specimen is homozygous for the {\it A} allele, $(1,1)$ if the 
specimen is heterozygous, and $(0,2)$ if the specimen is homozygous for the {\it a} allele.
This representation can also accommodate situations where the gene has more than two alleles in 
the population by increasing the length of the vector to the number of the alleles.
The genotype of $n$ individuals is represented by an $n \times s$ matrix $\mathbf{G}$, called 
the {\it genotype matrix}, that is composed of the genotype vectors as its rows; $G_{ij}$ denotes 
how many copies of the $j$-allele the $i$-individual holds. For example, consider the follwing
genotyping matrix with $6$ individuals and only $2$ alleles:
\begin{equation*}
	\mathbf{G}=
	\left[ 
	\begin{array}{cc}
	2&0 \\
	2&0 \\
	2&0 \\
	1&1 \\
	2&0 \\
	0&2
	\end{array}				
 	\right] 
\end{equation*}
In that case, the $4^{th}$ individual is a carrier, the $6^{th}$ is affected, and the others are normal.

The genotyping matrix can be represented by a bipartite graph. Let $G$ be a bipartite multigraph,
$G=(X,S,E,p)$ that is built according to the genotype matrix,
where $X$ is a set of $n$ specimens, 
$S$ is a set of $s$ possible alleles in the population, and 
the edge set, $E$, denotes which subset of alleles each specimen holds. 
The degree of the specimen nodes,
$\deg(x_i)$, is a constant denoted by $p$, which represents the genome ploidy
\footnote{
	this assumption is valid 
	for the vast majority of the genotyping problems, but some particular cases that involve 
	copy number variation, such as Spinal Muscular Atrophy (SMA) \cite{artice_sma_is_copy_number},
	do not have a constant degree in their specimen nodes. They will remain outside the scope 
	of this manuscript. In addition, for the sex chromosomes in male $p=1$.
} 
 and in human $p=2$. The degree of each allele node, $\deg(s_i)$, is a random variable 
that is dictated by the prevalence of the genotypes in the population.
According to that representation, genotyping is in general the task of reconstructing 
the bipartite graph from the sequencing information - finding the edge set $E$ where $X$ and $S$ are 
known, subject to $p=2$. 

In a carrier screen, one is mainly interested in reconstructing a part of the graph, 
$E_{risk}$, that represents the subset of individuals that are carriers of the recessive risk alleles. 
The graph is very sparse due to the low prevalence of the risk alleles. 
Moreover, a large portion of severe genetic disease exhibit {\it complete penetrance}
 \cite{book_genes_to_genomes}, 
meaning that the affected individuals are symptomatic, 
and therefore are known, and do not participate in the screen. 
Thus, finding that one edge of a given individual is connected to a risk allele node 
immediately implies that the other edge is 
connected to a non-risk allele node, which further reduces 
the degrees of freedom in the graph reconstruction.

Consider two examples of carrier screens.
First, consider a screen for $\Delta F508$, the most prevalent mutation
 in Cystic Fibrosis (CF) among people of European descent (Fig. \ref{fig_bipart}).
In that case, the set $S$ has two members: WT and mutant, and the expectation of the ratio between 
$\deg(s_{mutant})$ to $\deg(s_{WT})$ is around 1:29
for screens in European \cite{article_cf_worldwide}\cite{article_cf_israeli}.
Most of specimen nodes sends double edges to the WT node, 
which implies that these specimens carry two copies of the
normal allele. A small portion of specimen nodes are connected to the two different 
allele nodes, meaning that these specimens are carriers for CF. There are no specimens 
that are connected by double edges to the mutant allele, since this mutation always 
leads to CF, and the affected individuals do not participate in the screen.
Consider also a screen with multiple risk alleles, as in the case of FMR1 gene that causes
Fragile X mental retardation\cite{article_FMR}. 
This gene has dozens of alleles, but only a small subset causes the disease. Therefore,
we need only to resolve edges to the risk alleles.
However, we are intrested in more than a binary classification of 
the specimens to carriers and normals, as the causative alleles carry different degrees
of disease risk (technically known as penetrance), and identifying the exact allele vector
has clinical utility.

\begin{figure}[!t]
\centering
\includegraphics[width=2in]{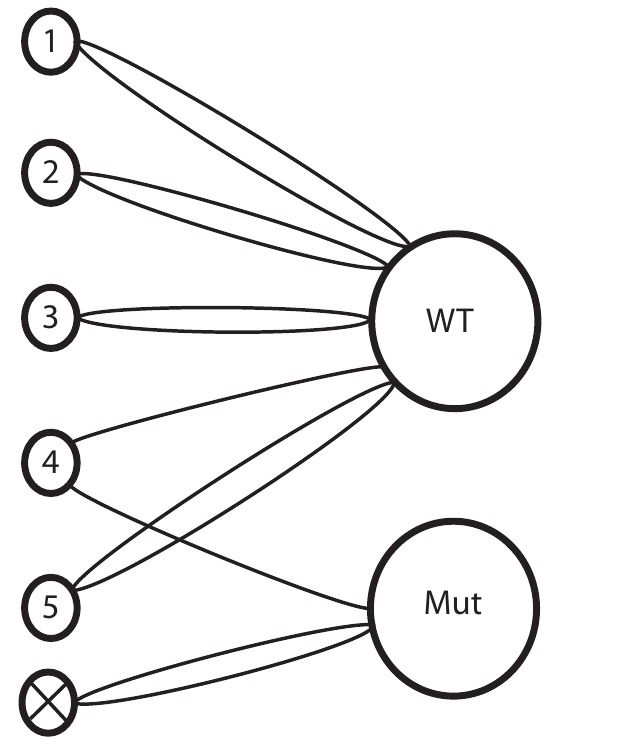}
\caption{Cystic Fibrosis $\Delta F508$ Screen as a Bipartite Multigraph Reconstruction.
There are two allele nodes, the WT and the $\Delta F508$ mutation.
Samples 1, 2, 3, 5 are WT, while specimen 4 is a carrier.
The specimen labeled with 'X' is affected and does not enter to the screen. $E_{risk}$ is the 
edge between specimen $4$ and the 'Mut' node.
}
\label{fig_bipart}
\end{figure}

\subsection{Defining a Cost-effective Reconstruction}
\begin{table*}[!t]
\caption{Summary of Query Design Parameters}
\centering
\begin{tabular}{p{1cm} p{5.5cm} p{2.5cm} p{6.8cm}}%
\hline\hline
\bf{Notation} & \bf{Meaning} & \bf{Typical Values} & \bf{Comments}  \\ [1ex]
\hline
$\mathbf{\Phi}$ & Query design & & \\[2ex]
$n$ & Number of specimens & Thousands & \\[2ex]
$t$ &  Number of queries (pools) & & \\[2ex]
$w$ & Weight & $\leqslant 8$  & Number of times a specimen is sampled \\[2ex]
$r_{max}$ & Max. level of compression & $\lesssim 1000$ & Maximal number of specimens in a pool \\[2ex]
$\tau_{max}$ & Number of queries in the largest query group & Up to a few hundreds & Corresponds to barcode synthesis reactions for a single experiment \\
\hline
\end{tabular}
\label{table_query_notation}
\end{table*}

Following the analysis above, a genotyping assay is a query of the form: ``which allele nodes
are connected to the interrogated specimen nodes?''. The current DNA sequencing 
methods that rely on serial specimen processing perform 
this query for each individual separately. However, the sparsity and the
restricted structure of $G$ suggest that $E_{risk}$ may be found in a relatively small number of queries when 
performing the queries on pools of specimens. Fortunately, the sequencing capacity of next 
generation sequencers enables the querying of pools of large numbers of specimens.

We will refer to the task of reconstructing $G$ in the most cost effective 
way as the {\it minimal genotyping problem}. Note that this task is intentionally 
not defined as minimizing the number of queries, since there are additional 
factors that determine the cost and the feasibility of the procedure.

We envision a minimal genotyping strategy that is based on a non-adaptive query schedule
in order minimize the turnover time and the need to re-pool the specimens multiple times. 
Our strategy starts by pooling samples of the specimens according to a certain design as
denoted by $\mathbf{\Phi}$, which is a $t \times n$ binary matrix; the columns of $\mathbf\Phi$ 
represent specimens, and each row determines a pool of specimens to be queried.
For example, if the first row of $\mathbf\Phi$ is $(1,0,1,0,1...)$, it specifies that the $1^{st}, 3^{rd}, 5^{th}, ...$ 
specimens are pooled and queried (sequenced) together.
Since the pooling is carried out using a liquid handling robot that can 
take several specimens in every batch, we only consider designs in which all specimens are 
sampled the same number of times to reduce the robotic logistics.
We define the {\it weight} $w$ of $\mathbf{\Phi}$ to be the number of times a specimen is sampled, or
the number of 1 entries in a given column vector: every column is constrained to 
have the same number of 1's in this design. Let $r_i$ be the {\it compression level} 
of the $i_{th}$ query, namely the number of 1 entries in the $i^{th}$ row, which 
denotes the number of specimens in the $i^{th}$ pool.

In large scale carrier screens, $n$ is typically between few thousands to tens of thousands of specimens.
We restrict ourself to query designs with $r_{max} \lesssim 1000$ specimens, due to technical 
/ biological limitations (in DNA extraction and PCR amplification) when processing pools  with larger
 number of specimens. 
A single query in such designs, even when composed of $1000$ specimens, 
does not saturate the sequencing capacity of next generation platforms. 
In order to fully exploit the capacity, 
we will pool queries together into {\it query groups} until the size of each group reaches the 
sequencing capacity limit, and we will sequence those groups in distinct reactions. 
Before pooling the queries, we will label each query with a unique DNA barcode 
in order to retain its identity 
(for an in-depth protocol of this approach see \cite{article_genome_research}).
Thus, the number of queries, $t$, is proportional to the number 
of DNA barcodes that should be synthesized, and one objective of the query design, similar
to those found in group testing and compressed sensing, is to minimize $t$.

In practice, once the DNA barcodes are synthesized, there is enough material for a few dozens 
experiments, and one can re-use the same barcode reagents for 
every query group as these are sequenced in distinct reactions.
Hence, the number of queries in the largest pooling group, $\tau_{max}$, dictates the synthesis cost
for a small series of experiments. While this does not change the asymptotic cost 
(e.g. for synthesizing barcodes for a large series of experiments),
it has some practical implications, and we will include it in our analysis.

The overall sequencing capacity needed for genotyping is proportional to $\sum_{i=1}^t{r_i}$, the 
total sum of the sizes of the queries. By definition 
$nw = \sum_{i=1}^t r_i$. Thus, the weight $w$ determines the requisite
sequencing capacity for a given number of specimens, and it is an additional factor that 
should be minimized in order to achieve a cost effective design. 
Moreover, decreasing the weight also reduces the number of times a specimen is sampled, 
and consequently, the robot time, and the amount of material that is consumed.
Notice that minimizing the weight of the design is not required by traditional compressed sensing 
construction.

Next generation sequencers are usually composed of several 
distinct biochemical chambers, called 'lanes', that can be processed in a single batch.
We assume that the sequencing capacity needed for $n$ specimens corresponds to one lane\footnote{
recent data have shown that when the number of specimens is a few thousand up to tens of thousands 
this assumption is valid \cite{article_nm_snps}}.
Since in total $nw$ aliquots of specimens are sampled in the pooling step, 
one needs $w$ lanes to sequence the entire design, where each lane is loaded with a different query group.

We do not intend to specify a global cost function that includes the costs of
barcode synthesis, robotic time, sequencing lanes, and other reagents. Clearly, 
these costs vary with different genotyping strategies, sequencing technologies, and so on. 
Rather, we will present heuristic rules that would be applicable in most situations.
First, $w$ should not exceed the maximal number of lanes
that can be processed in a single sequencing batch, as launching a run is expensive and time consuming,
and currently, for the most widespread next-generation sequencing platform, $w\leq8$ \cite{article_illumina_nature_paper}.
Therefore, it is also desirable that a design construction will have an 
explicit mechanism to specify the target weight.
Second, we assume that the cost of adding a sequencing lane is about two to three orders 
more than the cost of synthesizing an additional barcode. 
This will mainly served to benchmark the results of our design
to the outcome of other designs that were studied in group testing. Table \ref{table_query_notation} presents the notations we used in that part.

To conclude this part, the query design for the minimal genotyping problem is to
find a $t \times n$ design matrix composed of 0's and 1's $\mathbf{\Phi}$ that that provides sufficient
information to reconstruct $G$, while minimizing $t$ and keeping the column sum or weight $w$ below a
certain threshold. 
We term a design that addresses these objectives as {\it light weight design}.

\subsection {The Compositional Channel}
The sequencing procedure starts by capturing random DNA molecules from the input material,
and therefore, the ratios of sequence types reflect the corresponding ratios
of the genotypes in the input material. For example, consider a sample that is composed of a mixture of two
specimens in equal ratio, where one specimen is homozygous WT and the other is heterozygous. 
About $3/4$ of the sequence reads will correspond to the WT 
allele and $1/4$ to the mutant genotype.  
Since the input material in our case is composed of pools of specimens, the sequencing results 
are given by the following conditional probability:
 \begin{equation}\label{equ_compositional_channel_no_errors}
         f_{\beta}(\mathbf{Y}\mid\mathbf{\Phi G})
\end{equation}
where $\mathbf{Y}$ is a $t \times s$ matrix that denotes the sequencing results, namely the number of 
sequence reads for each genotype/query combination, 
and $\mathbf{G}$ is the $n \times s$ biadjacency matrix 
of the genotyping graph. $\beta$ is a sampling parameter, a non-negative integer 
that denotes the number of reads for each query. In reality, $\beta$ is a random variable
with Poisson distribution, since each query has different number of reads. However,
for simplicity we will treat it as a constant.
$f_{\beta}(\mathbf{Y}\mid\mathbf{X})$ denotes a multinomial random 
process that corresponds to the sampling procedure of the 
sequencers, the joint distribution of the sequencing results is therefore given by 

\begin{equation}
	f_{\beta}(\mathbf{Y}\mid\mathbf{X}) = 
	{\displaystyle \prod_{i=1}^t}
		\alpha_{i}
       		\exp
       		\left( 
		-\beta \sum_{j=1}^{s}
				\overline{Y}_{ij}
				\log(
					1/\overline{X}_{ij}
				      )
       		\right)
\end{equation}
where $\alpha_{i}=\frac{\beta!}{\prod_{j=1}^{s}Y_{ij}!}$. As 
$\beta\rightarrow\infty$ the relative ratios of a row in $\mathbf{Y}$ become similar to 
the ratios of allele nodes degrees of the subgraph induced by the specimens in the pool.
We will term the process in Eq. (\ref{equ_compositional_channel_no_errors}) 
{\it compositional channel} with parameter $\beta$.
The reason that we used this name is that the channel places $s$-dimensional real
space input vectors in an $s-1$-dimensional simplex, which is reminiscent of the concept 
of compositions in data analysis \cite{book_compositional}.

The compositional channel is closely related to two other channels, the {\it superimposed channel}, and
the {\it real adder channel}. As we mentioned earlier, the superimposed channel has been 
extensively studied in the group testing literature, and describes queries 
that only return the presence or absence of the tested feature among 
the members in the pool. On pooled data, measured without noise, the superimposed channel 
would be given by:
 \begin{equation}\label{equ_superimposed_channel}
        \mathbf{Y}_{s}= \mathbb{I}(\mathbf{\Phi G})
\end{equation}
The information degradation here is more severe than in the case of the compositional channel,
since the observer can not quantify the number of positive items from a single query with positive
answer. 
The output of the compositional channel can be further processed as if it was obtained 
by a superimposed channel.
In that case $\mathbf{Y}_{s}$ denotes only the presence / absence of 
an allele in a query.
This degradation is given by:
\begin{equation}\label{superimposed_degredation}
	\mathbf{Y}_{s} = \mathbb{I}(\mathbf{Y})
\end{equation}

The real adder channel describes the result of a query as a linear combination of the samples 
in the pool, and is given by:
\begin{equation}\label{linear_channel}
	\mathbf{Y}_{l} = \mathbf{\Phi G}
\end{equation}
This type of channel serves as the main model for compressed sensing, and it captures many physical
phenomena.
A closely related models were studied in group testing for finding
counterfeit coins with a precise spring scale\cite{article_coin_problem} and in multi-access
 communication
\cite{article_channel_reviews}\cite{article_t_adder_channel_1}\cite{article_t_adder_channel_2}. 
Ideally, when one knows the number of specimens in each pool, and $\beta \rightarrow\infty$,
data from the compositional channel can be treated as if it was obtained by a real adder channel, since
the compositional vectors can be placed back in the real space by normalizing 
$\mathbf{Y}$ to $\overline{\mathbf{Y}}$ and multiplying the result with 
the number of specimens in each query.

In reality, the sequencer may also produce errors when reading 
the DNA fragments. Since the fragments are composed of a barcode region 
and the interrogated region, sequencing errors may lead to association of sequence reads
with the wrong query, and to an incorrect genotype detection.
DNA barcodes can be easily extended in order to add more redundancy to the codeword that they carry, 
and experience has demonstrated that 
that errors in barcode annotation are insignificant\cite{article_hamady}, and we will not treat
this type of errors. On the other hand, sequencing errors in the interrogated region 
are more involved and we will classify them into two categories according to their outcome: 
{\it confounding errors}, meaning that a sequence read that was derived from one genotype 
is decoded as another valid genotype,
and {\it non-sense errors}, meaning that a decoded sequence read does not correspond to any allele node
in $G$. Non-sense errors are easily handled, for instance, by filtering those sequence reads, as we assume
 that all possible alleles are known beforehand. Unfortunately, 
there is no simple remedy for confounding errors, and they may distort the 
observed allele ratios in the queries. For instance, consider a pool of 100 specimens none of 
which has a mutant allele that is sampled with $\beta\rightarrow\infty$. If the confounding error rate is 
$0.5\%$, the data will falsely indicate that one of the specimens in the pool is a carrier.
The effect of the sequencing errors on genotyping may be denoted by a conditional probability which includes a confusion matrix:
\begin{equation}\label{equ_compositional_channel_with_errors}
         f_{\beta}(\mathbf{Y}\mid\mathbf{\Phi G \Lambda})
\end{equation}
$\mathbf{\Lambda}$ denotes an $s\times (s+1)$ confusion matrix that
indicates the probability that the $i^{th}$ genotype is confounded with the $j^{th}$ genotype; the $s+1$
column indicates the transition probability to a non-sense genotype. We will call the process in Eq.
(\ref{equ_compositional_channel_with_errors}) {\it compositional channel with errors}. 

The values of $\mathbf{\Lambda}$ are dependent on the sequence differences between the
interrogated alleles and on the specific chemistry that is utilized by the sequencing platform. Based on
previous work regarding the most abundant next generation sequencer \cite{article_nm_snps}\cite{article_me_nm}, we presume 
that subtle mutation differences (known as SNPs), such as $W1282X$ mutation in CF or 
$Y231X$ mutation in Canavan disease, correspond to confounding error rates of up to $1\%$,
and when the sequence differences are more profound, like in $\Delta F508$ mutation in CF, 
we expect that the $s \times s$ left submatrix in $\mathbf{\Lambda}$ will resemble the identity matrix.
Table \ref{table_channels} summarizes the different channel models in this section.

\begin{table}[!t]
\caption{Comparison of Different Channels Models}
\centering
\begin{tabular}{ l  l  l }%
\hline\hline
\bf{Channel model} & \bf{Measurement process}  & \bf{Example}  \\ [1ex]
\hline
Superimposition & OR operation              & Antibody reactivity                   \\[1ex]
Compositional    & Multinomial sampling & Next generation sequencing  \\[1ex]
Real Adder                  & Additive                        & Spring scale                             \\ [1ex]
\hline
\end{tabular}
\label{table_channels}
\end{table}

\section{QUERY DESIGN}\label{section_encoding}
\subsection{Constraints On Light-Weight Designs}
Group testing theory suggests a sufficient condition for $\mathbf{\Phi}$, called d-disjunction, 
that ensures faithful and tractable reconstruction of any up to $d$ sparse vector that 
was obtained from a superimposed channel\cite{article_kautz_singelton}. 
Since the compositional channel can be degraded to a
 superimposed channel, d-disjunction is also a sufficient criterion for reconstructing a 
$d$ sparse vector over a noise free compositional channel.
Respecting a carrier screen and reconstruction $E_{risk}$, if each risk allele node has 
less than $d$ edges, d-disjunction is a sufficient condition to reconstruct $E_{risk}$ 
given a sufficient sequencing depth and no errors.

Based on the analysis about cost effective designs, we are looking for non-trivial d-disjunct matrices
that reduce the number of barcodes with a minimal increase of the weight.

\begin{definition}\label{def_disjunct}
$\mathbf{\Phi}$ is called {\it d-disjunct} if and only if the Boolean sum of up to 
$d$ column vectors of the matrix does not include any other column vector.
\end{definition}

\begin{definition}\label{def_reasonble}
$\mathbf{\Phi}$ is called {\it reasonable} if it does not  contain a row with only a 
single 1 entry, and its weight is more than $0$.
\end{definition}
We are only interested in reasonable designs. Clearly, if a design includes queries 
composed of single specimens, it is more cost effective to genotype those specimens in serial processing.

\begin{definition}\label{def_lambda}
	$\lambda_{ij}$ is the dot-product of two column vectors of $\mathbf{\Phi}$, and 
	$\lambda_{\max} \triangleq \max(\lambda_{ij})$.
\end{definition}

\begin{lemma}\label{lem_minweight}
	The minimal weight of a reasonable d-disjunct matrix is: $w=d+1$. 
\end{lemma}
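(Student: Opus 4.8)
The plan is to prove the equality $w=d+1$ by establishing the two inequalities separately: that no reasonable $d$-disjunct matrix can have column weight $w\le d$, and that weight exactly $d+1$ is attained by an explicit construction.

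For the lower bound $w\ge d+1$ I would argue by contradiction. Suppose $\mathbf{\Phi}$ is reasonable and $d$-disjunct with column weight $w\le d$, fix an arbitrary column $c$, and let $i_1,\dots,i_w$ be the rows in which $c$ has a $1$. Since $\mathbf{\Phi}$ is reasonable, every row contains at least two $1$'s, so for each $k$ there is a column $c_k$, with index different from that of $c$, that also has a $1$ in row $i_k$. The set $\{c_1,\dots,c_w\}$ has at most $w\le d$ distinct members, none equal to $c$, and the Boolean sum of these columns has a $1$ in every row $i_k$; hence its support contains the support of $c$. This contradicts Definition \ref{def_disjunct}, so $w\ge d+1$.

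For achievability I would exhibit a reasonable $d$-disjunct matrix all of whose columns have weight exactly $d+1$. The key sub-fact is a pigeonhole observation: if $\mathbf{\Phi}$ has constant column weight $d+1$ and $\lambda_{\max}\le 1$ (with $\lambda_{\max}$ as in Definition \ref{def_lambda}), then $\mathbf{\Phi}$ is $d$-disjunct, because if the support of a column $c$ were contained in the union of the supports of $d$ other columns, then its $d+1$ support rows, each assigned to one of those $d$ columns, would by pigeonhole put two rows in a single column $c_j$, giving dot-product $\ge 2$ between $c$ and $c_j$, a contradiction. It then suffices to build a matrix with constant column weight $d+1$, $\lambda_{\max}\le 1$, and no row of weight $\le 1$. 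I would use a Reed--Solomon / Kautz--Singleton-type design: pick a prime $m\ge d+1$, take $t=(d+1)m$ rows indexed by pairs $(k,v)$ with $k\in\{0,\dots,d\}$ and $v\in\mathbb{Z}_m$, and one column for each $(a,b)\in\mathbb{Z}_m^2$, placing a $1$ in row $(k,v)$ exactly when $v\equiv ak+b \pmod{m}$. Each column has exactly one $1$ in each of the $d+1$ blocks, so its weight is $d+1$; two distinct columns $(a,b)\ne(a',b')$ agree in block $k$ iff $(a-a')k\equiv b'-b$, which has at most one solution $k$, so $\lambda_{\max}\le 1$; and each row $(k,v)$ gets a $1$ from the $m\ge 2$ columns $(a,\,v-ak)$, so no row has weight $\le 1$ and the matrix is reasonable. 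Combined with the lower bound, this forces the minimal weight to be exactly $d+1$.

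I expect the only real work to be verifying the two structural properties of the construction — $d$-disjunctness (reduced to the short pigeonhole lemma via $\lambda_{\max}\le 1$) and the row-weight count for reasonableness; the lower bound is immediate once reasonableness is invoked. A secondary point to handle carefully is reading ``any other column vector'' in Definition \ref{def_disjunct} as a column of different index, so that equal/repeated columns are genuinely forbidden, which is what makes the lower-bound contradiction go through even in the degenerate case $w=1$.
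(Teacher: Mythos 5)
Your proposal is correct and takes essentially the same route as the paper: your lower-bound argument (reasonableness supplies at most $w$ other columns whose Boolean sum covers a fixed column, so a $d$-disjunct matrix needs $w\ge d+1$) is exactly the paper's, and for achievability the paper simply cites Kautz and Singleton, whose construction is precisely the Reed--Solomon-type design with $\lambda_{\max}\le 1$ that you write out and verify. The only difference is that you make the cited construction and its pigeonhole verification explicit rather than referencing it.
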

\begin{proof}
	Assume that $\lambda_{max}$ occurs between $\overrightarrow {C(i)}$ and $\overrightarrow {C(j)}$,
	 two column vectors in $\mathbf{\Phi}$.
	According to definition (\ref{def_reasonble}), every 1 entry in $\overrightarrow {C(i)}$ 
	intersects with at least one column vector. Thus, there are at most $w$ column vectors 
	that intersect with 	$\overrightarrow {C(i)}$. The Boolean sum of those $w$ column 
	vectors includes 	$\overrightarrow {C(i)}$, so the matrix is not w-disjunct. 
	According to definition (\ref{def_disjunct}), it can not be d-disjunct, and $w \geq d+1$. 
	The existence d-disjunct matrices with $w = d+1$ was proved by Kautz and Singleton\cite{article_kautz_singelton}.
 \end{proof}

\begin{definition}\label{def_minweight}
	$\mathbf{\Phi}$ is called {\it light-weight} d-disjunct in case $w=d+1$. 
\end{definition}

\begin{lemma}\label{lem_lambda_eq_one}
 	$\mathbf{\Phi}$ is a light-weight $(w-1)$-disjunct iff 
	$\lambda_{max} =1$ and $\mathbf{\Phi}$ is reasonable.
\end{lemma}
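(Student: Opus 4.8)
The plan is to prove both directions by relating the disjunctness property directly to $\lambda_{\max}$ when the weight is fixed at $w$. First I would handle the forward direction: assume $\mathbf{\Phi}$ is light-weight $(w-1)$-disjunct, so it is reasonable by Definition~\ref{def_reasonble} and $(w-1)$-disjunct with column weight exactly $w$. I claim $\lambda_{\max}=1$. Since the design is reasonable, $\lambda_{\max}\geq 1$ (every 1 entry in any column shares its row with at least one other column, so some pair of columns has a nonzero dot product). For the reverse inequality, suppose toward a contradiction that some pair of columns $\overrightarrow{C(i)},\overrightarrow{C(j)}$ satisfies $\lambda_{ij}\geq 2$. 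Then $\overrightarrow{C(i)}$ has $w$ ones, at least two of which lie in rows also hit by $\overrightarrow{C(j)}$. That leaves at most $w-2$ ones of $\overrightarrow{C(i)}$ uncovered by $\overrightarrow{C(j)}$; by reasonableness each of those remaining ones is shared with some other column, giving at most $w-2$ additional columns whose Boolean sum — together with $\overrightarrow{C(j)}$ — covers all of $\overrightarrow{C(i)}$. That is a set of at most $w-1$ columns (not including $\overrightarrow{C(i)}$) whose Boolean sum includes $\overrightarrow{C(i)}$, contradicting $(w-1)$-disjunctness. Hence $\lambda_{\max}\leq 1$, so $\lambda_{\max}=1$.

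For the converse, assume $\lambda_{\max}=1$ and $\mathbf{\Phi}$ is reasonable; I want to show $\mathbf{\Phi}$ is $(w-1)$-disjunct, and since it has constant column weight $w$ this makes it light-weight $(w-1)$-disjunct by Definition~\ref{def_minweight}. Take any column $\overrightarrow{C(i)}$ and any set of $w-1$ other columns $\overrightarrow{C(j_1)},\dots,\overrightarrow{C(j_{w-1})}$. Because $\lambda_{ij_k}\leq 1$ for every $k$, each column $\overrightarrow{C(j_k)}$ covers at most one of the $w$ ones of $\overrightarrow{C(i)}$. Thus the Boolean sum of these $w-1$ columns covers at most $w-1$ of the $w$ positions where $\overrightarrow{C(i)}$ is nonzero, leaving at least one position of $\overrightarrow{C(i)}$ uncovered. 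Therefore the Boolean sum does not include $\overrightarrow{C(i)}$, which is exactly the $(w-1)$-disjunction condition of Definition~\ref{def_disjunct}.

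The only subtle point — and the step I would be most careful about — is the counting argument in the forward direction: one must correctly account for the fact that the $\geq 2$ positions shared with $\overrightarrow{C(j)}$ are ``used up'' by a single column, so that the leftover positions require only $w-2$ further columns rather than $w-1$, yielding a covering family of size $\leq w-1$. This is where the constant-weight hypothesis $w=d+1$ and reasonableness are both essential; dropping either breaks the bookkeeping. I expect no further obstacle, as the converse direction is a direct pigeonhole count once $\lambda_{\max}=1$ is in hand.
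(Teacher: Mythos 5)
Your proof is correct and follows essentially the same route as the paper: your forward direction is the same covering count the paper uses (it phrases it as ``at most $w-\lambda_{\max}+1$ column vectors suffice to cover $\overrightarrow{C(i)}$,'' which forces $\lambda_{\max}<2$, with reasonableness supplying $\lambda_{\max}>0$). The only divergence is in the converse, where the paper simply cites Kautz and Singleton's formula $d=\lfloor(w-1)/\lambda_{\max}\rfloor$ while you give the underlying pigeonhole argument directly --- a self-contained presentation of the same fact.
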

\begin{proof}
	First we prove that if $\mathbf{\Phi}$ is a light-weight $(w-1)$-disjunct then 
	$\lambda_{max} =1$.
	Assume $\lambda_{max}$ occurs between $\overrightarrow{C(i)}$ and
	$\overrightarrow{C(j)}$. According to definition (\ref{def_reasonble}), 
	there is a subset of at most $w-\lambda_{max} +1$ column vectors
	that $\overrightarrow{C(i)}$ is included in their Boolean sum.
	However, the Boolean sum of any $w-1$ column vectors does not 
	include $\overrightarrow{C(i)}$. Thus, $\lambda_{max}<2$, and according 
	to definition (\ref{def_reasonble}), $\lambda_{max} > 0$. 
	Thus, $\lambda_{max} =1$. In the other direction, Kautz and Singelton\cite{article_kautz_singelton} 
	proved that $d=\lfloor{(w-1)}/\lambda_{max}\rfloor$, and $\mathbf{\Phi}$ is light-weight
	according to definition (\ref{def_minweight}).
\end{proof}

\begin{lemma}\label{lem_bound_of_singelton}
	The number of columns of $\mathbf{\Phi}$ is bounded by: 
	\begin{equation}
		n \leq \frac
		{\left(
				\begin{array}{c}
				t	\\	\lambda_{max}+1
				\end{array}
		\right)}
		{\left(
				\begin{array}{c}
				w	\\	\lambda_{max}+1
				\end{array}
		\right)}
	\end{equation}
\end{lemma}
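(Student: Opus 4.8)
The plan is to view each column of $\mathbf{\Phi}$ as a subset of the row index set $[t]=\{1,\dots,t\}$ and obtain the bound by a packing (double-counting) argument over small subsets of rows. For a column $c$, write $S_c\subseteq[t]$ for its support, i.e.\ the set of rows in which $c$ has a $1$; since the design has weight $w$, we have $|S_c|=w$ for every column, and by Definition \ref{def_lambda} the dot product $\lambda_{cc'}=|S_c\cap S_{c'}|$ is at most $\lambda_{max}$ for any two distinct columns $c,c'$.

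The key step I would isolate is the following claim: \emph{no $(\lambda_{max}+1)$-element subset $T$ of $[t]$ is contained in the supports of two distinct columns.} Indeed, if $T\subseteq S_c$ and $T\subseteq S_{c'}$ with $c\neq c'$, then $T\subseteq S_c\cap S_{c'}$, so $|S_c\cap S_{c'}|\ge\lambda_{max}+1$, contradicting $\lambda_{cc'}\le\lambda_{max}$. Given this, I would count the $(\lambda_{max}+1)$-subsets of $[t]$ two ways. On one hand, each support $S_c$ contains exactly $\binom{w}{\lambda_{max}+1}$ subsets of size $\lambda_{max}+1$, and by the claim the collections arising from different columns are pairwise disjoint, so the $n$ columns account for $n\binom{w}{\lambda_{max}+1}$ distinct $(\lambda_{max}+1)$-subsets of $[t]$. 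On the other hand there are only $\binom{t}{\lambda_{max}+1}$ such subsets in total. Hence $n\binom{w}{\lambda_{max}+1}\le\binom{t}{\lambda_{max}+1}$, and dividing by $\binom{w}{\lambda_{max}+1}$ gives the stated inequality.

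There is no serious obstacle: this is the Johnson-type packing bound for constant-weight superimposed codes, and the only points requiring care are (i) that $w\ge\lambda_{max}+1$, so that $\binom{w}{\lambda_{max}+1}\ge1$ and the division is legitimate — this follows from the relation $d=\lfloor(w-1)/\lambda_{max}\rfloor$ with $d\ge1$ used in the proof of Lemma \ref{lem_lambda_eq_one} (equivalently, from Lemma \ref{lem_minweight}) — and (ii) stating the disjointness claim cleanly, as above. I would also remark that in the light-weight case $\lambda_{max}=1$ the bound specializes to $n\le\binom{t}{2}/\binom{w}{2}$, which is essentially tight for the Chinese-Remainder-Theorem constructions treated in the following subsections, explaining why the lemma is recorded at this point.
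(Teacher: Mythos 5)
Your proof is correct. The paper gives no argument of its own for this lemma --- it simply cites Kautz and Singleton --- and your double-counting/packing argument over $(\lambda_{max}+1)$-subsets of the row set is precisely the classical proof of that bound for constant-weight codes with bounded pairwise intersection, so you have in effect supplied the details the paper delegates to the reference, including the one genuine point of care (that $w\ge\lambda_{max}+1$ so the denominator is nonzero).
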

\begin{proof}
	see Kautz and Singelton \cite{article_kautz_singelton}.
\end{proof}

\begin{theorem}\label{thr_lower_bound}
	The minimal number of rows, $t$, in a light-weight d-disjunct matrix is $t>\sqrt{w(w-1)n}$ 
\end{theorem}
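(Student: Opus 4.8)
The plan is to read the theorem off as a direct specialization of Lemma~\ref{lem_bound_of_singelton}, once the structural parameters of a light-weight $d$-disjunct matrix have been pinned down. First I would note that, by Definition~\ref{def_minweight}, a light-weight $d$-disjunct matrix has $w=d+1$; equivalently, it is a light-weight $(w-1)$-disjunct matrix. Hence Lemma~\ref{lem_lambda_eq_one} applies and forces $\lambda_{max}=1$: a reasonable design with $\lambda_{max}\ge 2$ cannot be $(w-1)$-disjunct, and reasonableness (Definition~\ref{def_reasonble}) rules out $\lambda_{max}=0$, leaving $\lambda_{max}=1$ as the only possibility.

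Next I would substitute $\lambda_{max}=1$ into the counting bound of Lemma~\ref{lem_bound_of_singelton}, obtaining
\begin{equation*}
 n \;\le\; \frac{\binom{t}{2}}{\binom{w}{2}} \;=\; \frac{t(t-1)}{w(w-1)} ,
\end{equation*}
and then clear denominators to get $w(w-1)\,n \le t(t-1)$. Finally, since any reasonable design has at least one row we have $t\ge 1$, so $t(t-1)<t^2$ strictly; chaining this with the previous inequality gives $w(w-1)\,n < t^2$, i.e. $t>\sqrt{w(w-1)\,n}$, which is the claim.

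I do not expect a genuine obstacle here: the real content is already packaged in the Kautz--Singleton column bound quoted as Lemma~\ref{lem_bound_of_singelton} together with the characterization $\lambda_{max}=1$ from Lemma~\ref{lem_lambda_eq_one}. The two points that need a little care are (i) verifying that the hypotheses of Lemma~\ref{lem_lambda_eq_one} are exactly met — that ``light-weight $d$-disjunct'' in the sense of Definition~\ref{def_minweight} is the same as ``light-weight $(w-1)$-disjunct'' — and (ii) keeping the inequality strict, which one gets for free from $t(t-1)<t^2$ rather than from the combinatorial bound itself, so no additional argument about the ceiling/floor in the Kautz--Singleton relation is required.
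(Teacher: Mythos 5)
Your proposal is correct and follows essentially the same route as the paper: substitute the $\lambda_{max}=1$ characterization from Lemma~\ref{lem_lambda_eq_one} into the Kautz--Singleton column bound of Lemma~\ref{lem_bound_of_singelton} and simplify. In fact you are slightly more careful than the paper's own derivation, which passes directly from $n \le \binom{t}{2}/\binom{w}{2}$ to $n \le t^2/(w(w-1))$ and concludes only the non-strict $\sqrt{w(w-1)n}\le t$, whereas your observation that $t(t-1)<t^2$ is what actually delivers the strict inequality claimed in the theorem statement.
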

\begin{proof}
	plug lemma (\ref{lem_lambda_eq_one}) to lemma (\ref{lem_bound_of_singelton}): 
	\begin{align*}
		n &\leq \frac{\left(\begin{array}{c}t\\2\end{array}\right)}{\left(\begin{array}{c}w\\2\end{array}\right)} \\
		n  &\leq \frac{t^2}{w(w-1)} \\ %\frac{t(t-1)}{w(w-1)}<
	\sqrt{w(w-1)n} &\leq t
	\end{align*}
\end{proof}

\begin{corollary}\label{cor_min_tao}
	The minimal number of barcodes is $\tau_{max} \simeq \sqrt{n}$
	in a light-weight weight design.
\end{corollary}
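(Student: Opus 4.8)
The plan is to derive the corollary directly from Theorem~\ref{thr_lower_bound} together with the simple bookkeeping that links $\tau_{max}$, $t$ and $w$. Recall from the setup that every specimen is sampled exactly $w$ times, so the $t$ queries of a light-weight design split into $w$ query groups, one group loaded per sequencing lane, and the number of barcodes that actually has to be synthesized is the size of the largest group, $\tau_{max}$. The first step is therefore a one-line pigeonhole remark: $w$ non-negative integers summing to $t$ cannot all be below their average, hence $\tau_{max}\ge t/w$. (Over all admissible groupings of a fixed design this is tight, $\min\tau_{max}=\lceil t/w\rceil$, so there is no loss here.)

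The second step is to feed in the bound of Theorem~\ref{thr_lower_bound}: in a light-weight $d$-disjunct design $t>\sqrt{w(w-1)n}$, so
\[
  \tau_{max}\ \ge\ \frac{t}{w}\ >\ \frac{\sqrt{w(w-1)n}}{w}\ =\ \sqrt{\Bigl(1-\tfrac{1}{w}\Bigr)n}\ =\ \sqrt{n}\,\sqrt{1-\tfrac{1}{w}}\,.
\]
For the weights of practical interest ($2\le w\le 8$, and a fortiori as $w$ grows) the factor $\sqrt{1-1/w}$ sits within a small constant of $1$ and tends to $1$, so $\tau_{max}$ is bounded below by essentially $\sqrt{n}$.

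For the matching upper bound I would point to a \emph{balanced} light-weight construction — the Chinese‑Remainder‑Theorem design built in the next subsection, or more elementarily an $r\times r$ grid‑type array with $r\approx\sqrt{n}$ — whose $t\approx w\sqrt{n}$ queries partition evenly into $w$ groups of about $\sqrt{n}$ queries each. Since the same $\approx\sqrt{n}$ barcodes can be reused across all $w$ lanes (they are sequenced separately), this gives $\tau_{max}\approx\sqrt{n}$, and combined with the lower bound we get $\tau_{max}\simeq\sqrt{n}$.

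The only genuine subtlety is cosmetic: the lower bound obtained this way is $\sqrt{(1-1/w)\,n}$, marginally below $\sqrt{n}$, so the statement should be read asymptotically in $w$ (equivalently, up to the $O(1)$ slack $\sqrt{1-1/w}$); one should also stress that no cleverer \emph{unbalanced} grouping can beat the $t/w$ bound, since $\tau_{max}$ is by definition the largest group. Verifying that the achievability construction is truly light-weight (column sum exactly $w=d+1$, and $\lambda_{max}=1$ via Lemma~\ref{lem_lambda_eq_one}) and that its queries really do split into $w$ equal-size groups is the routine part of the argument.
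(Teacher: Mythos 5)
Your proof is correct and follows essentially the same route as the paper's one-line argument: divide the $t>\sqrt{w(w-1)n}$ bound of Theorem~\ref{thr_lower_bound} among the $w$ query groups to get $\tau_{max}\ge t/w\approx\sqrt{n}$. You simply spell out the pigeonhole step, the $\sqrt{1-1/w}$ slack, and the achievability via the Chinese-remainder construction, all of which the paper leaves implicit.
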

\begin{proof}
	There are $w$ query groups, and the bound is immediately derived from theorem (\ref{thr_lower_bound}).
\end{proof}
A light weight design is characterized by $\lambda_{max}=1$, and $t\sim \Omega((d+1)\sqrt{n})$
rows. The low $\lambda_{max}$ attribute does not only increase the disjunction of the matrix but also 
eliminates any short cycle of 4 in the factor graph built upon $\mathbf\Phi$, which enhances 
the convergence of the reconstruction algorithm. We will discuss this
property in section \ref{section_decoding}.

\subsection{Light Chinese Design}
We suggest a light weight design construction based on the Chinese Remainder Theorem.
This construction reduces the number of queries to the vicinity of 
the lower bound derived in the previous section, 
and can be tuned to different weights and numbers of specimens. The repetitive structure of the 
design simplifies its translation to robotic instructions, and permits easy monitoring.

Constructing $\mathbf{\Phi}$ starts by specifying: (a) the number of specimens, and (b) the required disjunction, which immediately determines the weight. Accordingly, a set of $w$ positive integers 
$Q=\{q_1, \dots ,q_w\}$, called {\it query windows}, is chosen with the following requirement: 
\begin{equation}\label{equ_assert}
	\forall \{q_i,q_j\}, { i \neq j} :	\;
	lcm(q_i,q_j) \geq n
\end{equation}
where $lcm$ denotes the least common multiplier.
We map every specimen $x$ to a residue system $(r_1,r_2,\cdots,r_w)$ according to:
\begin{equation}\label{equ_pooling_rule}
	\begin{array}{c}
 	x\equiv r_1 \pmod {q_1} \\
	x\equiv r_2 \pmod {q_2} \\
	\vdots \\
	x\equiv r_w \pmod {q_w} \\
	\end{array}
\end{equation}
Then, we create a set of $w$ all-zero sub-matrices $\mathbf{\Phi}^{(1)},\mathbf{\Phi}^{(2)},\cdots$ 
called {\it query groups} with sizes $q_i \times n$. The submatrices captures the mapping in 
Eq. (\ref{equ_pooling_rule}) by setting $\Phi_{rx}^{(i)}=1$ when  this clause: $x\equiv r \pmod{q_{i}}$
is true. Finally, we vertically concatenate
the submatrices to create $\mathbf{\Phi}$:
\begin{equation}\label{equ_concatanate}
	\mathbf{\Phi}=
	\left[ 
	\begin{array}{ccccc}
		& & [\mathbf{\Phi_1}]  & & \\ \hline
		& & [\mathbf{\Phi_2}] &  & \\ \hline
		& & \vdots & &\\ \hline		
		& & [\mathbf{\Phi_w}] & &	
	\end{array}				
 	\right] 
\end{equation}
For instance, this is\footnote{
	when $x=q_i$ we set $r_i=q_i$, so the first row in every submatrix is $1$
}
$\mathbf\Phi$ for $n=9$, and $w=2$, with $\{q_1=3,q_2=4\}$:
\begin{equation*}
	\mathbf{\Phi}=
	\left[ 
	\begin{array}{ccccccccc}
	1&0&0&1&0&0&1&0&0 \\
	0&1&0&0&1&0&0&1&0 \\
	0&0&1&0&0&1&0&0&1 \\ \hline
	1&0&0&0&1&0&0&0&1 \\
	0&1&0&0&0&1&0&0&0 \\
	0&0&1&0&0&0&1&0&0 \\
	0&0&0&1&0&0&0&1&0
	\end{array}				
 	\right] 
\end{equation*}

\begin{definition}\label{def_minimal_chinese_design}
	Construction of $\mathbf{\Phi}$ according to 
	Eq. (
		\ref{equ_assert},
		\ref{equ_pooling_rule},
		\ref{equ_concatanate}
	) is called light Chinese design.
\end{definition}

\begin{theorem}\label{thr_d_crt_dijunct}
	A light Chinese design is a light weight design.
\end{theorem}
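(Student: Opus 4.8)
The plan is to reduce the claim to Lemma~\ref{lem_lambda_eq_one}: it suffices to show that a light Chinese design $\mathbf{\Phi}$ is \emph{reasonable} and has $\lambda_{max}=1$, since then Lemma~\ref{lem_lambda_eq_one} gives that $\mathbf{\Phi}$ is light-weight $(w-1)$-disjunct, i.e. $d$-disjunct with $w=d+1$, which is exactly what Definition~\ref{def_minweight} and the notion of a light weight design require.

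I would first record the elementary facts about the weight. Each submatrix $\mathbf{\Phi}^{(i)}$ has dimensions $q_i\times n$, and by the pooling rule~\eqref{equ_pooling_rule} its column $x$ has precisely one nonzero entry, in the row indexed by the residue of $x$ modulo $q_i$. Concatenating the $w$ submatrices as in~\eqref{equ_concatanate}, every column of $\mathbf{\Phi}$ then has exactly $w$ ones, so the weight is the constant $w>0$. For reasonableness it remains to argue that no row of $\mathbf{\Phi}$ contains a single $1$: the row of $\mathbf{\Phi}^{(i)}$ at residue $r$ has a $1$ in every column $x\in\{1,\dots,n\}$ with $x\equiv r\pmod{q_i}$, and in the regime of interest the query windows satisfy $q_i\ll n$, so each such residue class contains at least two specimens. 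This is the one place where the admissible size of the $q_i$ relative to $n$ really matters, and I expect it to be the only mildly delicate point in the argument.

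The main step is $\lambda_{max}=1$. Fix two distinct columns, of specimens $x\neq x'$ with $x,x'\in\{1,\dots,n\}$. Ones originating in different submatrices occupy disjoint blocks of rows, so $\lambda_{x,x'}$ equals the number of indices $i$ for which these two columns coincide inside $\mathbf{\Phi}^{(i)}$, i.e. the number of $i$ with $x\equiv x'\pmod{q_i}$. If this held for two distinct indices $i\neq j$, then $q_i\mid (x-x')$ and $q_j\mid (x-x')$, hence $\mathrm{lcm}(q_i,q_j)\mid (x-x')$; but requirement~\eqref{equ_assert} forces $\mathrm{lcm}(q_i,q_j)\ge n$, while $0<|x-x'|<n$, a contradiction. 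Hence $\lambda_{x,x'}\le 1$ for all distinct columns, so $\lambda_{max}\le 1$; and $\lambda_{max}\ge 1$ because, for example, the columns of specimens $1$ and $1+q_1$ both carry a $1$ in the first row of $\mathbf{\Phi}^{(1)}$. Thus $\lambda_{max}=1$, and together with reasonableness, Lemma~\ref{lem_lambda_eq_one} completes the proof. The CRT-style condition~\eqref{equ_assert} is doing all the real work here; the remaining obstacle is purely the bookkeeping of the reasonableness check.
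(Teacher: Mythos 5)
Your proof is correct and follows essentially the same route as the paper's: establish $\lambda_{max}=1$ from the pairwise condition $\mathrm{lcm}(q_i,q_j)\ge n$ (your divisibility argument is just the Chinese Remainder Theorem unwound) and then invoke Lemma~\ref{lem_lambda_eq_one}. You are in fact slightly more careful than the paper, which silently omits the check that the design is \emph{reasonable} (each row containing at least two $1$'s), a hypothesis of Lemma~\ref{lem_lambda_eq_one} that, as you note, does require $q_i$ to be small relative to $n$.
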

\begin{proof}
	Let $x \equiv v_x\pmod {q_i}$ and $x \equiv u_x\pmod {q_j}$, where $i \neq j$. 
	According to the Chinese Remainder Theorem there is a one-to-one correspondence
	$\forall x: x \leftrightarrow (u_x, v_x)$. Thus, every two positive entries in 
	$\overrightarrow{C_x}$ are unique. 
	Consequently, $|\overrightarrow{C_x} \cap \overrightarrow{C_y}|<2$, and $\lambda_{max}<2$.
	Since specimens in the form $r, r+q_i, r+2q_i, \dots$ are pooled together $\mathbf{\Phi}$,
	$\lambda_{max}=1$. 
	According to lemma (\ref{lem_lambda_eq_one}) $\mathbf{\Phi}$ is light-weight.
\end{proof}

\subsection{Choosing the Query Windows}
The set of query windows, $Q$, determines the number of rows in $\mathbf{\Phi}$
as:
\begin{equation}\label{equ_number_of_rows}
	t=\sum_{i=1}^w{q_i}
\end{equation}
Since $lcm(x,y)=\frac{xy}{gcd(x,y)}$, where $gcd$ is the greatest common divisor, minimizing 
the elements in $Q$ subject to the constraint in Eq. (\ref{equ_assert}) 
implies that $q_1,\dots, q_w$ should 
be pairwise coprimes and $q_i \geq \sqrt{n}$.
Let $\kappa = \lceil \sqrt{n} \rceil -1$, the definition of the problem we
seek to solve is as follows: given a threshold, $\kappa$, and $w$, a valid
solution is a set, $R$, that contains $w$ co-prime numbers, all of which are
larger than $\kappa$. We seek for the optimal solution, $Q$, being the solution
satisfying that $\sum(Q)$ is minimal.

We begin by introducing a bound on $\max(Q)-\kappa$, a value we will name $\delta$, or 
the discrepancy of the optimal solution. In order to give an upper bound on $\delta$, let us first consider
a bound that is not tight, $\delta_0$, the discrepancy of the solution $Q_0$ that is composed
of the $w$ smallest primes greater than $\kappa$. 
Primes near $\kappa$ have a density of $1/\log(\kappa)$, so $\delta_0\approx w\log(\kappa)$.
$\delta_0$ is known to be an upper bound on $\delta$
because if any value greater than
$\kappa+\delta_0$ appears in the $Q$, then there is also a prime $q$,
$\kappa<q\le \max(Q_0)$ that is not
used. There is at most one value in $Q$ that is not co-prime with $q$ and if it
exists it is larger than $q$. Replace it by $q$ in $Q$ (or replace $\max(Q)$
with $q$ if all numbers in $Q$ are co-prime with $q$) to reach a better
solution, contradicting our assumption that $Q$ is the optimal solution.

This upper bound can improved as follows. We know that $Q \subset (\kappa,\kappa+\delta_0]$. In
this interval, there is at most one value that divides any number greater or
equal to $\delta_0$. Consider, therefore, the solution $Q_1$ composed of the $w$
smallest numbers larger than $\kappa$ that have no factors smaller than $\delta_0$. In
order to assess the discrepancy of this solution, $\delta_1$, note that the
density of numbers with no factors smaller than $\delta_0$ is at least $1/\log(\delta_0)$.
This can be shown by considering the (lower) density that is the density of
the numbers with no factors smaller than $p_{\delta_0}$, where $p_i$ indicates the
$i$'th smallest prime. This density is given by:
\begin{align}
	\prod_{i<\delta_0} 1-\frac{1}{p_i}
	& = e^{\log\left(\prod_{i<\delta_0} 1-\frac{1}{p_i}\right)} \nonumber \\
	& =e^{\sum_{i<\delta_0} \log(1-\frac{1}{p_i})} \nonumber \\ 
	& \approx e^{\sum_{i<\delta_0} -\frac{1}{p_i}} \\
	& \approx e^{-\log(\log(\delta_0))} \nonumber \\
	&= \frac{1}{\log(\delta_0)}  \nonumber \\\nonumber
\end{align}
where $e$ is Euler's constant and we make use of $\sum_{i<\delta_0} \frac{1}{p_i}
\approx \log(\log(\delta_0))$, a well-known property of the prime harmonic series.
Like $\delta_0$, the bound $\delta_1$ is also an upper bound on $\delta$.
To show this, consider that the optimal solution $Q$ may have $z$ values larger
than $\kappa+\delta_1$ in it. If so, there are at least $z$ members of $Q_1$ absent from
it. Replace the $z$ members of $Q$ with the absent members of $Q_1$ to reach
an improved solution. We conclude that $z=0$ and $\delta_1 \ge \delta$.

\begin{theorem}
	For $\kappa \rightarrow \infty$ and large $w$, $\delta \approx w \log(w)$.
\end{theorem}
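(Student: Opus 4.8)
The plan is to promote the two bootstrap steps that already produced $\delta_0$ and $\delta_1$ into a \emph{convergent} iteration, and then to pair the resulting upper bound with a counting lower bound. Concretely, I would first observe that the passage from $\delta_0$ to $\delta_1$ used nothing about $\delta_0$ beyond its being a valid upper bound on $\delta$. So, inductively, suppose $\delta_k\ge\delta$ is known, so that $Q\subseteq(\kappa,\kappa+\delta_k]$. Any two elements of $Q$ sharing a prime factor $p$ have difference divisible by $p$, hence $p<\delta_k$; therefore the $w$ smallest integers above $\kappa$ that are free of every prime factor below $\delta_k$ are automatically coprime to all of $Q$. By Mertens' theorem these \emph{rough} integers have density $\approx 1/\log\delta_k$, so they lie inside $(\kappa,\kappa+\delta_{k+1}]$ with $\delta_{k+1}\approx w\log\delta_k$, and the same swap argument (replace any member of $Q$ exceeding $\kappa+\delta_{k+1}$ by an unused rough integer, strictly lowering $\sum Q$) forces $\delta\le\delta_{k+1}$. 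This yields the recursion $\delta_{k+1}\approx g(\delta_k)$ with $g(x)=w\log x$, started from $\delta_0\approx w\log\kappa$.

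Next I would analyze $g$. It has a unique fixed point $\delta^{*}$ solving $\delta^{*}=w\log\delta^{*}$, and since $x/\log x$ is increasing for $x>e$ and equals $w$ exactly at $x=\delta^{*}$, every $x>\delta^{*}$ satisfies $\delta^{*}=g(\delta^{*})<g(x)<x$. Because $\delta_0\approx w\log\kappa>\delta^{*}$ once $\kappa$ is large, the sequence $(\delta_k)$ decreases monotonically to $\delta^{*}$; as each $\delta_k$ bounds $\delta$ from above, so does the limit, giving $\delta\le\delta^{*}$. Bootstrapping $\delta^{*}=w\log\delta^{*}$ once more gives $\log\delta^{*}=\log w+\log\log\delta^{*}=(1+o(1))\log w$, hence $\delta^{*}=(1+o(1))\,w\log w$. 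The hypothesis of large $w$ is precisely what lets us drop the $w\log\log w$ and smaller corrections, and $\kappa\to\infty$ is what makes the Mertens/sieve density estimates legitimate all along the iteration.

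For the matching lower bound I would argue by counting. Any valid $Q$ is a set of $w$ pairwise coprime integers inside the length-$\delta$ interval $(\kappa,\kappa+\delta]$, so their smallest prime factors are distinct. Those elements whose smallest prime factor is $\le\delta$ number at most $\pi(\delta)\sim\delta/\log\delta$; the remaining elements are free of every prime factor below $\delta$, and a standard upper-bound sieve (sieving primes up to $\sqrt{\delta}$, then Mertens) gives $O(\delta/\log\delta)$ such integers in an interval of length $\delta$. Hence $w=O(\delta/\log\delta)$, i.e. $\delta=\Omega(w\log w)$, and this applies in particular to the optimal $Q$. Combined with $\delta\le\delta^{*}=(1+o(1))w\log w$ this gives $\delta\approx w\log w$.

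The one genuinely delicate point is the passage to the limit in the iteration: the number of bootstrap steps needed to erase the $\kappa$-dependence of $\delta_0$ itself grows with $\kappa$, so one cannot simply stop after finitely many steps but must establish monotone convergence of $(\delta_k)$ to the $\kappa$-independent fixed point $\delta^{*}$, and must check that the rough-integer density $1/\log\delta_k$ remains a valid estimate all the way down to $\delta_k\approx\delta^{*}$ — which is fine exactly because $\delta^{*}$ is large when $w$ is large, again the source of the ``large $w$'' assumption. The counting lower bound, by contrast, is routine once one uses the distinct-smallest-prime-factor observation together with an off-the-shelf sieve upper bound.
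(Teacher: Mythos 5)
Your proof follows essentially the same route as the paper: the upper bound is the same iterated bootstrap $\delta_{k+1}\approx w\log(\delta_k)$ driven to its fixed point $\delta^{*}=w\log(\delta^{*})\approx w\log(w)$ (you merely make the monotone-convergence and fixed-point analysis explicit where the paper leaves it implicit), and the lower bound is the same Mertens-type density count showing that $w$ pairwise coprime integers cannot fit in an interval of length $o(w\log(w))$. Your bookkeeping for the lower bound --- distinct smallest prime factors, at most $\pi(\delta)$ exceptional elements, plus a sieve bound of $O(\delta/\log(\delta))$ on the $\delta$-rough numbers --- is a slightly cleaner packaging of the paper's successive sieving by the first $w$ primes, but it is the same argument.
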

\begin{proof}
	Consider repeating a similar improvement procedure as was 
	used to improve from $\delta_0$ to $\delta_1$ 
	an arbitrary number of times. We define $Q_{i+1}$ as the set of $w$ minimal 
	numbers that are greater than $\kappa$  and have no factors smaller than $\delta_i$, 
	where $\delta_i$ is the discrepancy of solution $Q_i$. This 
	creates a series of upper bounds for $\delta$ that is monotone 
	decreasing, and therefore converges. Because each $\delta_{i+1}$ satisfies
	$\delta_{i+1} \approx w \log(\delta_i)$, we conclude that the limit will satisfy
 	$\delta_\infty \approx w \log(\delta_\infty)$, meaning $\delta \le \delta_\infty \approx w \log(w)$.
	This gives an upper bound on $\delta$.
	To prove that this bound is tight, we will show that, asymptotically, it is
	not possible to fit $w$ co-prime numbers on an interval of size less than
	$w \log(w)$. To do this, note first that at most one number in the set can be
	even. Fitting $w-1$ odd numbers requires an interval of size at least $2w$
	(up to a constant). The remaining numbers can contain at most one value that
	divides by $3$. The rest must be either $1$ or $2$ modulo $3$. This indicates
	that they require an interval of at least $2 \cdot \frac{3}{2} w$. More
	generally, if $S$ contains $w$ values, with each of the first $w$ prime numbers
	dividing at most one of said values, then the interval length of $S$ must be
	at least on the order of:
	\begin{align*}
			w \prod_{i<w} 1+\frac{1}{p_i} 
			&= w e^{\log\left(\prod_{i<w} 1+\frac{1}{p_i}\right)} \\
			&= w e^{\sum_{i<w} \log\left(1+\frac{1}{p_i}\right)} \\
			& \approx w e^{\sum_{i<w} \frac{1}{p_i}} \\
			& \approx w e^{\log(\log(w))} \\
			& = w \log(w)\\
	\end{align*}
	This gives a lower bound on $\delta$ equal to the previously calculated upper
	bound, meaning that both bounds are tight.
\end{proof}

\begin{corollary}
	$\tau_{max} = \sqrt{n}+w \log(w)$
\end{corollary}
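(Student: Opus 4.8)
The plan is to identify $\tau_{max}$ with the size of the largest query window. In the construction of Definition~\ref{def_minimal_chinese_design}, $\mathbf{\Phi}$ is the vertical concatenation of $w$ query groups $\mathbf{\Phi}^{(1)},\dots,\mathbf{\Phi}^{(w)}$, where $\mathbf{\Phi}^{(i)}$ is a $q_i\times n$ submatrix; since each row of $\mathbf{\Phi}$ is one query and each query group is sequenced in a distinct reaction, the $i$-th query group consists of exactly $q_i$ queries. Hence the largest query group has $\tau_{max}=\max(Q)$ queries, and the problem reduces to estimating $\max(Q)$ for the optimal choice of query windows.

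Next I would plug in the discrepancy analysis carried out just before the preceding theorem. For the minimal-sum solution $Q$, the elements are pairwise coprime and all strictly exceed $\kappa=\lceil\sqrt{n}\rceil-1$, so by definition of the discrepancy $\delta=\max(Q)-\kappa$ we have $\max(Q)=\kappa+\delta$. The preceding theorem established $\delta\approx w\log(w)$ in the regime $\kappa\to\infty$ with $w$ large. Substituting, $\tau_{max}=\max(Q)=\kappa+\delta\approx\bigl(\lceil\sqrt{n}\rceil-1\bigr)+w\log(w)$.

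Finally I would absorb the additive $O(1)$ coming from $\kappa=\lceil\sqrt{n}\rceil-1=\sqrt{n}+O(1)$ into the asymptotics, obtaining $\tau_{max}=\sqrt{n}+w\log(w)$. As a consistency check this is compatible with Corollary~\ref{cor_min_tao}, which gives $\tau_{max}\gtrsim\sqrt{n}$ for any light-weight design, the light Chinese design overshooting the bound only by the lower-order term $w\log(w)$. There is no genuinely hard step remaining: all the analytic work lives in the $\delta\approx w\log(w)$ theorem, and the only care needed here is bookkeeping of the regime of validity (inheriting $\kappa\to\infty$ and large $w$) and checking that the constant lost in $\kappa\approx\sqrt{n}$ and the lower-order terms hidden inside $\delta\approx w\log(w)$ are dominated by the two displayed terms.
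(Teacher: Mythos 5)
Your proposal is correct and follows essentially the same route as the paper's own proof: identify $\tau_{max}$ with $\max(Q)$ because the query groups are exactly the submatrices $\mathbf{\Phi}^{(i)}$ of sizes $q_i\times n$, then substitute $\max(Q)=\kappa+\delta\approx\sqrt{n}+w\log(w)$ using the preceding discrepancy theorem. Your added bookkeeping about the regime of validity and the consistency check against the lower bound are sensible but not a different argument.
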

\begin{proof}
	\begin{align*}
	 \max(Q) - \kappa &= \delta  \\
	\max(Q)  &= \sqrt{n} + \delta \\
	 \max(Q)  &= \sqrt{n} + w log(w) \\
	\end{align*}
	Since the number of positive entries in each submatrix is the same and equals to $n$
	the query groups are formed by partitioning $\mathbf{\Phi}$ to the submatrices. Consequently,
	$\tau_{max} = \max(Q)$.
\end{proof}

Importantly, the maximal compression level, $r_{max}$, is never more than $\sqrt{n}$, and the
light Chinese design is practical for genotyping tens of thousands of specimens.
The tight bound on $\delta$ also implies a tight bound on the sum of $Q$. 
Let $\sigma_{Q} = \sum_{i=1}^{w}{q_i} - w\kappa$.  We give a tight bound on
$\sigma_{Q} - w\kappa$  that, asymptotically, reaches a 1:1 ratio with the optimal value.

\begin{theorem}\label{thr_lower_crt}
	The number of queries in the light Chinese design is
	$t \approx \Theta(w\kappa+\frac{1}{2} w^2 \log(w))$
\end{theorem}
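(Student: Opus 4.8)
The plan is to write $t=\sum_{i=1}^w q_i = w\kappa + \sigma_Q$ with $\sigma_Q=\sum_{i=1}^w(q_i-\kappa)$, and to pin down $\sigma_Q$ by sandwiching it between matching upper and lower bounds, both of which reduce to the discrepancy estimate $\delta\approx w\log(w)$ and the interval-packing argument already established above. The $w\kappa$ summand is fixed once $\kappa$ (equivalently $n$) is fixed, so the theorem follows as soon as I show $\sigma_Q\sim\tfrac12 w^2\log w$.

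For the upper bound I would exhibit a concrete admissible $Q$ and estimate its cost. The natural candidate is the limiting construction $Q_\infty$ from the proof of the discrepancy theorem: the $w$ smallest integers exceeding $\kappa$ with no prime factor below $\delta_\infty\approx w\log w$. As noted there, two distinct such integers cannot share a prime factor, since a common factor would be a prime $p\ge\delta_\infty$ and two distinct multiples of $p$ differ by at least $p$, whereas both integers lie in an interval of length $\delta_\infty$; hence $Q_\infty$ is pairwise coprime and admissible. The density of integers with no prime factor below $\delta_\infty$ is $\approx 1/\log(\delta_\infty)=1/\log(w\log w)\sim 1/\log w$, so the $k$-th smallest element of $Q_\infty$ sits at about $\kappa + k\log w$. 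Summing and passing to the integral $\int_0^w x\log w\,dx=\tfrac12 w^2\log w$ gives $\sigma_Q \lesssim \tfrac12 w^2\log w$.

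For the matching lower bound I would use the packing estimate from the discrepancy theorem in localized form. Take any admissible $Q$ sorted as $q_1<\dots<q_w$. For each $k$ the prefix $\{q_1,\dots,q_k\}$ is a set of $k$ pairwise-coprime integers all exceeding $\kappa$, so each of the first $k$ primes divides at most one of them, and the product bound $k\prod_{i<k}(1+1/p_i)\approx k\log k$ forces this prefix to span an interval of length $\gtrsim k\log k$; since $q_1>\kappa$ this yields $q_k\gtrsim\kappa+k\log k$. Summing over $k$ and again passing to the integral $\int_1^w x\log x\,dx=\tfrac12 w^2\log w-\tfrac14 w^2$, whose correction term $-\tfrac14 w^2$ is of lower order than $w^2\log w$, gives $\sigma_Q\gtrsim\tfrac12 w^2\log w$. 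Combining the two bounds, $\sigma_Q\sim\tfrac12 w^2\log w$, hence $t\approx\Theta(w\kappa+\tfrac12 w^2\log w)$.

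The main obstacle, exactly as in the preceding theorem, is keeping the density heuristics honest: I must justify the localized packing bound ``$k$ coprimes need an interval of length $\gtrsim k\log k$'' uniformly over $k\le w$, and verify that the leading constant $\tfrac12$ really coincides on both sides rather than merely up to a constant factor (it does, because $\log(\delta_\infty)\sim\log w\sim\log k$ for the range of $k$ that dominates the sum, and $\sum_{k\le w} k = \tfrac12 w^2(1+o(1))$). The remaining ingredients — coprimality of $Q_\infty$ and the sum-to-integral comparisons — are routine once the discrepancy theorem is available.
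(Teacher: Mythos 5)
Your proposal is correct at the same (heuristic) level of rigor as the paper, but it is organized differently from the paper's own proof. The paper proves both bounds by induction on $w$: for the lower bound it peels off $\max(Q_w)$, uses the discrepancy estimate $\max(Q_w)-\kappa\gtrsim w\log w$, and compares the remainder to the optimal $(w-1)$-solution; for the upper bound it extends the optimal $Q_{w-1}$ by the first integer above $\max(Q_{w-1})$ with no factor below $p_w$. Both directions then telescope to $\sum_{i\le w}i\log i$. You instead work within a fixed $w$: for the upper bound you take the explicit rough-number set $Q_\infty$ and sum the positions of its elements using the density $1/\log\delta_\infty$, and for the lower bound you apply the interval-packing argument to every sorted prefix, getting $q_k\gtrsim\kappa+k\log k$ termwise. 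The two routes use identical ingredients (the discrepancy theorem, the density of $\delta$-rough numbers, the prime-packing bound) and arrive at the same sum $\sum_{k\le w}k\log k$, so this is a legitimate unrolled variant of the paper's induction; your version has the advantage of making the origin of the constant $\tfrac12$ transparent and of correctly tracking the $-\tfrac14w^2$ correction (the paper's claim that $\sum_{i\le w}i\log i$ lies between $\tfrac12w^2\log(w-1)$ and $\tfrac12w^2\log w$ is actually false, though harmless for the $\Theta$ statement). Two caveats you already flag and should not lose: the matching of leading constants on the two sides holds only because Mertens-type constants ($\prod(1-1/p)$ versus $\prod(1+1/p)$) are dropped on both sides, exactly as in the paper's discrepancy theorem; and the localized packing bound for a prefix of size $k$ needs the relevant primes to be small compared with the interval length, which holds uniformly for $k\le w$ since the whole set sits in an interval of length $O(w\log w)$.
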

\begin{proof}
 	Proof that this is a lower bound is	
	by induction on $w$. Specifically, let us suppose the claim is true for $Q_{w-1}$
	and prove for $Q_w$. (There is no need to verify the ``start'' of the
	induction, as any bounded value of $w$ can be said to satisfy the approximation
	up to an additive error.) To prove a lower bound, $\sigma_{Q_w}$
	cannot be better than $\sigma_{Q_{w-1}} + w \log(w)$, as the discrepancy of $Q_w$
	is known and the partial solution $Q_w \setminus \max(Q_w)$ can not be better
	than $Q_{w-1}$.
	To prove that this is also an upper bound, consider that
	the discrepancy of $Q_w$ is known to be approximately $w \log(w)$, so any prime
	larger than approximately $p_w$ cannot be a
	factor of more than one member of the interval $(\kappa,\max(Q_w)]$.
	Furthermore, the optimal solution for $Q_w$ can not be significantly worse than
	the optimal solution for $Q_{w-1}$ plus the first number that is greater than
	$\max(Q_{w-1})$ and has no factors smaller than $p_w$. As we have shown before,
	this number is approximately $\max(Q_{w-1})+\log(w)$. However, we already know
	the discrepancy of $Q_{w-1}$ is approximately $(w-1) \log(w-1)$, so this new
	value is approximately $\kappa+(w-1) \log(w-1)+\log(w) \approx \kappa+w \log(w)$. Putting
	everything together, we get that
	$\sigma_{Q_w} \le \sigma_{Q_{w-1}} + w \log(w) \le \sum_{i \le w} i \log(i)$,
	proving the upper bound.	
	The value $\sum_{i \le w} i \log(i)$ is between $\frac{1}{2} w^2 \log(w-1)$
	and $\frac{1}{2} w^2 \log(w)$, so asymptotically $\sigma_Q$ converges to
	$\frac{1}{2} w^2 \log(w)$.
\end{proof}

We will now consider algorithms to actually find $Q$. 
First, consider an algorithm that begins by setting $\tau_{max}$ to the $w$ prime number 
after $\kappa$, and then runs an exhaustive search through all sets of
size $w$ that contain values between $\kappa$ and $\tau_{max}$. 
This is guaranteed to return the optimal result, and does so in complexity 
$O((\tau-\kappa)^w)$, which is asymptotically equal to 
$O((w \log(\kappa))^w)$. Though this complexity is hyper exponential, 
and so unsuitable for large values of $w$ it may be used for
smaller $w$. 

The upper bound described above suggests a polynomial algorithm for $Q$
since it is a bound that utilizes sets chosen such that
none of their elements have prime factors smaller than $\kappa$.
This implies the following simplistic algorithm that calculates a solution that
is asymptotically guaranteed to have a 1:1 ratio with the optimal $\sigma_Q$.

\begin{algorithmic}[1]
\STATE Let $Q$ be the set of the $w$ smallest primes greater than $\kappa$.
\REPEAT
\STATE $\delta \gets \max(Q)-\kappa$
\STATE $Q \gets$ the $w$ smallest numbers greater than $\kappa$ that have no
factors smaller than $\delta$
\UNTIL {$\delta=\max(Q)-\kappa$}
\STATE output $Q$.
\end{algorithmic}

In practice, this is never the optimal solution, as for example, it contains
no even numbers. In order to increase the probability that we reach the
optimal solution (or almost the optimal solution), we opt for a greedy version
of this algorithm. The greedy algorithm begins by producing the set of smallest
numbers greater than $\kappa$ that have no factors smaller than $\delta$ (as in the upper
bound). It continues by producing the set of smallest co-prime numbers greater
than $\kappa$ that have at most one distinct factor smaller than $\delta$ (as in the
calculation of the lower bound). Then, it attempts to add further elements with
a gradually increasing number of factors. If these attempts cause a decrease in
$\delta$, it repeats the process with a lower value of $\delta$ until reaching
stabilization.

\begin{algorithmic}[1]
\STATE $Q \gets$ initial solution.
\REPEAT
\STATE $\delta \gets \max(Q)-\kappa$
\STATE $n(x) \defeq$ the number of distinct primes smaller than $\delta$ in the
factorization of $x$.
\STATE Sort the numbers $\kappa+1, \ldots, \max(Q)$ by increasing $n(x)$ [major key]
and increasing value [minor key].
\FORALL {$i$ in the sorted list}
\IF {$i$ is co-prime to all members of $Q$ and $i<\max(Q)$}
\STATE replace $\max(Q)$ by $i$ in $Q$.
\ELSIF {$i$ is co-prime to all members of $Q$ except one, $q$, and $i<q$}
\STATE replace $q$ with $i$ in $Q$.
\ENDIF
\ENDFOR
\UNTIL {$\delta=\max(Q)-\kappa$}
\STATE output $Q$.
\end{algorithmic}

Because this greedy algorithm only improves the solution from iteration to
iteration, using the output of the first algorithm described as the initial
solution for it guarantees that the output will have all asymptotic optimality
properties proved above. In practice, on the range $\kappa=100 \ldots 299$ and
$w=2 \dots 8$ it gives the exact optimal answer in 91\% of the cases
and an answer that is off by at most $2$ in 96\% of the cases.
(Understandably, no answer is off by exactly $1$.) The worst results for it
appear in $w=8$, where only 82\% of the cases were optimal and
88\% of the cases were off by at most $2$.

Notably, due to the fact that $\kappa+1$ does not always appear in either the optimal
solution or the solution returned by the greedy algorithm, sub-optimal results
tend to appear in {\it streaks}: a sub-optimal result on a particular $\kappa$ value
increases the probability of a sub-optimal result on $\kappa+1$ (A similar
property also appears when increasing $w$), and we denote an interval of consecutive
$\kappa$ values where the greedy algorithm returns sub-optimal results to be a
``streak''. The number of streaks is, perhaps, a better indication for the
quality of the algorithm than the total number of errors. For the parameter
range tested (totaling 1400 cases), the greedy algorithm produced $61$
sub-optimal streaks (of which in only $23$ streaks the divergence from the
optimal was by more than 2). The worst $w$ was 6, measuring 14 streaks. The
worst-case for divergence by more than 2 was $w=8$, with 8 streaks.

In terms of the time complexity of this solution, this can be bounded as
follows. First, we assume that the values in the relevant range have been
factored in advance, so this does not contribute to the running time of the
algorithm. (This factorization is independent of $\kappa$ and $w$, except in the
very weak sense that $\kappa$ and $w$ determine what the ``relevant'' range to
factor is.) Next, we note that the initial $\delta$ is determined by searching for
$w$ primes, so we begin with a $\delta$ value on the order of $w \ln(\kappa)$. Each
iteration decreases $\delta$, so there are at most $w \log(\kappa)$ iterations. 
In each iteration, the majority of time is spent on sorting $\delta$ numbers. Hence, the
running time of the algorithm is bounded by $\delta^2 \log(\delta)$ or
$w^2 \log^2(\delta) (\log(w) + \log(\log(\kappa)))$. Clearly, this is a polynomial solution.
In practice, it converges in only a few iterations, not requiring the full
$\delta$ potential iterations. In fact, in the tested parameter range the algorithm
never required more than three iterations in any loop, and usually less. (Two
iterations in the greedy allocation loop is the minimum possible, and an
extra iteration over that was required in only $4\%$ of the cases.)

In some cases, it is beneficial to increase the number of barcodes from $\kappa$ to $\kappa_1$ in order to 
achieve higher probability of faithful reconstruction of signals that are not $d$ sparse. This is achieved
by finding $w$ integers in the interval $(\kappa,\kappa_1)$ that follow Eq. (\ref{equ_assert}) and maximize
$\prod_{i=1}^{w}{q_i}$. We give more details about this problem in appendix \ref{app_product_max}.

Lastly, we note that even though these approximation algorithms are necessary
for large values of $w$, for small $w$ an exhaustive search for the exact
optimal solution is not prohibitive, even though the complexity of such a
solution is exponential. One can denote the solution as
$Q=\{\kappa+s_1,\kappa+s_2, \ldots, \kappa+s_w\}$ in which case the values
$\{s_1, \ldots, s_w\}$ are only dependent on the value of $\kappa$ modulo
primes that are smaller than the maximal $\delta$ or approximately
$w \log(w)$. This means that the $\{s_1, \ldots, s_w\}$ values in the optimal
solution for any $(\kappa,w)$ pair is equal to their values for
$(\kappa \mod P,w)$, where $P$ is the product of all primes smaller than
$\delta$. Essentially, there are only $P$ potential values of $\kappa$ that
need to be considered. All others are equivalent to them.

In practice, the number of different $\kappa$ values that need to be considered
is significantly smaller than this. As an example, Fig. \ref{fig_tree} gives the
complete optimal solution for any value of $\kappa$ with $w=4$. The figure
shows that the set $S=\{s_1,\ldots,s_w\}$ for any $\kappa$ has only 7
possible values, and that determining which set produces the optimal
solution for any particular value of $\kappa$ can be done by at most 5 Boolean
queries regarding the value of $\kappa$ modulo specific primes.

\begin{figure}[!t]
\centering
\includegraphics[width=2.5in]{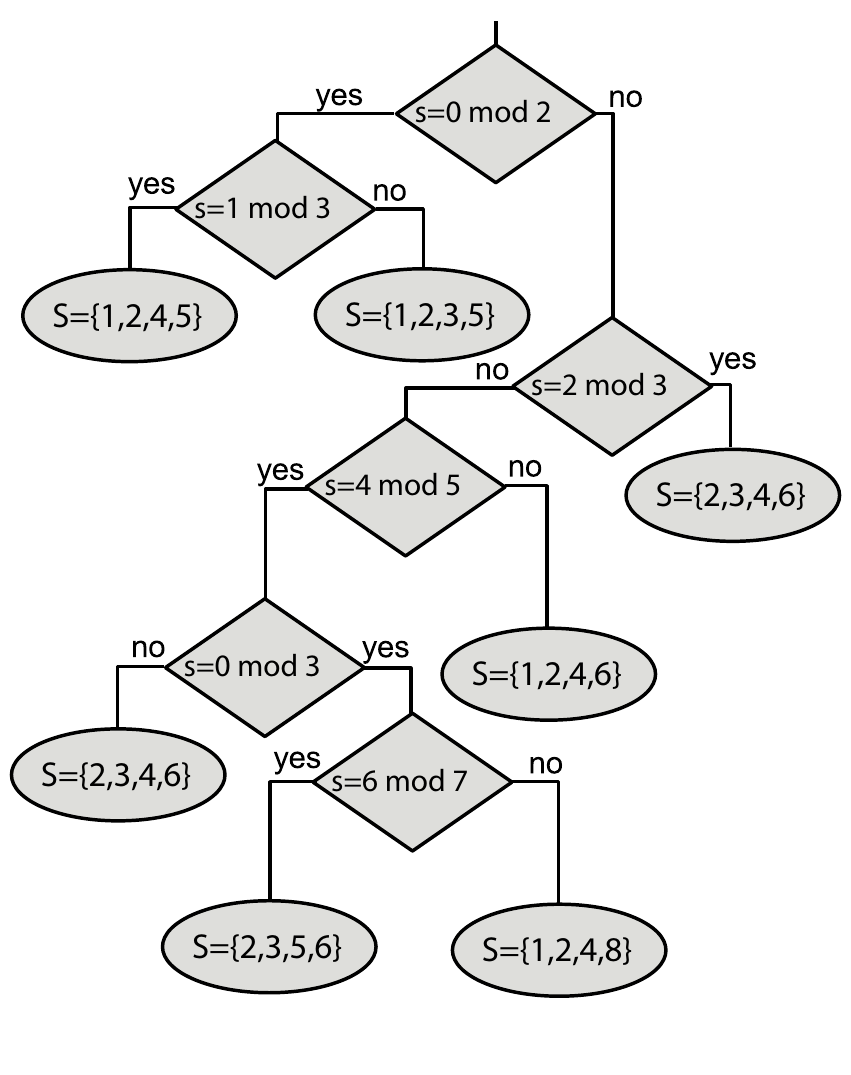}
\caption{Optimal Solution using Tree Search}
\label{fig_tree}
\end{figure}

\subsection{Comparison to Logarithmic Designs}
It is well established in group testing theory and in compressed sensing that 
certain designs can reach to the vicinity of the lower theoretical bound of $t \sim O(d\log{n})$ \cite{article_Dyachkov,article_measurements_vs_bits}. 
The $t \sim O((d+1)\sqrt{n})$ scale in light weight designs raises the question whether they
are really the most cost effective solution for the minimal genotyping problem with thousands of specimens
and $w \leqslant 8$.

We compared the results of the light Chinese design with the method of Eppstein and
 colleagues\cite{article_epsstein} for screens with 5000 and 40000 specimens (Table \ref{table_eppstein}). 
To the best of our knowledge, Eppstein's method shows for the general case the maximal 
reduction of $t$. Interestingly, it is also based on the Chinese Remainder Theorem,
but without the assertion in Eq. (\ref{equ_assert}). Instead, their method requires that $Q$ will be 
composed of co-prime numbers whose product is more than $n^d$ in order to create a d-disjunct matrix.
The number of queries in their method for a given $d$ is:
\begin{equation}\label{equ_eppstein_qureis}
	t \sim O(d^2 \log^2{n}/(\log{d} + \log{\log{n}}))
\end{equation}
and the weight is:
\begin{equation}\label{equ_eppstein_weight}
	w \sim O(d \log n /  (\log{d} + \log{\log{n}}))
\end{equation}
Notice that their weight scales with the number of specimens, implying that more sequencing lanes
 and robotic logistic are required with the growth of $n$ even if $d$ is constant.

First, we found that Eppstein's method is not applicable to the biological 
and technical constraints in the genotyping setting of $w\leqslant 8$ and $r_{max} \lesssim 1000$ 
(labeled in the table with $\dagger$). 
Second, the differences between the number of barcodes, $\tau_{max}$, and the number 
of queries in their method are no more than ten fold, but their weights are least 2.5 fold greater than the
weights in the light Chinese design. With the estimated cost ratio between barcode to 
sequencing lane to be around $1:100$, the light Chinese design is more cost effective.
%In fact, we noticed that the weight of other $t \sim O(d\log{n})$
%methods scale with $\log{n}$ (\hl{cite Haung}), and we conjecture that it is always  $w \sim  O(\log{n})$ 
%in those methods.
Finally, there is no intrinsic mechanism in Eppstein's method to specify the weight, 
and to limit it below a threshold.

\begin{table}[!t]
\centering
\caption{Comparison between Eppstein's Method to the Light Chinese Design}
\label{table_eppstein}
\begin{tabular}{|l|c|c|c|c|c|c|c|c|c|}
\hline
\multirow{2}{*}{$n$} &
\multirow{2}{*}{$d$} &
\multicolumn{4}{c|}{Eppstein} & 
\multicolumn{4}{c|}{Light Chinese design} \\ \cline{3-10}
& & $t$ & $w$ & $\tau_{max}$ & $r_{max}$ & $t$ & $w$ & $\tau_{max}$ & $r_{max}$ \\\hline
\multirow{3}{*}{5000} 	& 3 & 149 & $10^{\dagger}$ & 29 	& 1000	& 293 & 4 & 77 & 64\\
				    	& 4 & 237 & $12^{\dagger}$  & 37 	& 714	& 370 & 5 &  77 & 64\\
				    	& 5 & 336 & $15^{\dagger}$ & 47 	& 1000 	& 449 & 6 &  79 & 64 \\  \hline
\multirow{3}{*}{40000} & 3 & 209 & $12^{\dagger}$ & 37 	& $8000^{\dagger}$ & 811 & 4 & 205 & 199\\
					& 4 & 335 & $14^{\dagger}$ & 47 	& $5714^{\dagger}$ & 1020 & 5 & 209 & 199\\
					& 5 & 472 & $17^{\dagger}$ & 59 	& $5714^{\dagger}$ & 1231 & 6 & 211 & 199\\
\hline
\end{tabular}
\end{table}

\section{RECONSTRUCTION ALGORITHMS}\label{section_decoding}

\subsection{Bayesian Decoding Using Belief Propagation}
Now, we will turn to address the other part of the minimal genotyping problem, which is 
how to reconstruct $\mathbf{G}$ %(or just $E_{risk}$ 
given $\mathbf{Y}$ and $\mathbf{\Phi}$. 
In general, this is an ill-posed inverse problem, but the sparsity of $\mathbf{G}$ due to the biological 
constraints (e.g the diploidy of the genome and the absence of affected individuals in the screen)
and the low abundance of rare risk alleles permits such decoding.
The MAP decoding of the genotyping problem is given by:
\begin{equation}\label{equ_decoding_prob1}
	{\displaystyle \mathbf{G}_{MAP}
	\triangleq \argmax\displaylimits_{x_1,\dots, x_n} \Pr( x_1,\dots,x_n \mid \mathbf{Y})}
	%\prod\displaylimits_{i=1}^{n}{\mathbb{B}(x_i)}
\end{equation}
% $mathbb{B}(x_i)$ is an indicator function that asserts correct genome ploidy, and for human is:
% \begin{equation}\label{equ_what_is_b}
% 	{\displaystyle
% 		\mathbb{B}(x_i) =
% 			\left\{ 
% 				\begin{array}{rl}
% 				1 & \sum x_i = 2 \\
% 				0 & otherwise
%       				\end{array}
% 			\right.	
% 	}
% \end{equation}

For simplicity, we assume that we do not have any prior knowledge on the specimens, beside the 
expected frequency of the genotypes in the screen. Notice that kinship information between 
the specimens and familial history regarding genetic diseases are known is some cases and 
may enhance the decoding results, however, they will remain outside the scope of this manuscript.
$\eta$ is a probability vector with length of $s(s+1)/2$ that denotes 
the expected prevalence of each genotype. 
For instance, for $\Delta F508$ screen $\eta=(29/30,1/30,0)$ for normal, carriers, and affected,
 correspondingly.
Let $x_i$ be an instance of row vector in $\mathbf{G}$, and $c(x_i)$ be a binary vector
with length as $\eta$ that maps the allelic configuration of $x_i$ to an entry in a list of  genotypes. 
For instance, if there are two alleles in the population, $x_i$ is either $(2,0)$, $(1,1)$, or $(0,2)$,
and $c(x_i)$ is $(1,0,0)$, $(0,1,0)$, or $(0,0,1)$, correspondingly.
We denote the prior probability of $x_i$  by:
\begin{equation}\label{equ_define_phi}
	{\displaystyle
		\varphi(x_i) = 
		\prod\displaylimits_{j=1}^{s(s+1)/2}
		\eta_{j}^{c_j(x_i)}
	}	
\end{equation}
The prior probability for a certain graph configuration,  $(x_1, \dots, x_n)$, is:
\begin{equation}\label{equ_decoding_prior}
	{\displaystyle
		\prod\displaylimits_{i=1}^{n}
		\varphi(x_{i})
	}
\end{equation}

The data is also a subject to factorization, since the result of a particular query is solely determined by
the specimens in the pool:
\begin{equation}\label{equ_decoding_data_factorization}
	{\displaystyle
		\Pr(\mathbf{Y} \mid x_1, \dots , x_n) =
		\prod\displaylimits_{a=1}^{t}{\Pr(\mathbf{Y}_{a} \mid x_{\partial a} ) }
	}
\end{equation}
we used $x_{\partial a}$ to denote a configuration of the subset of specimens in the $a$ query, and 
$\mathbf{Y}_{a}$ denotes the $a^{th}$ row vector in $\mathbf{Y}$. The probability distribution
$\Pr(\mathbf{Y}_{a} \mid x_{\partial a} )$ is given by the compositional channel model in 
Eq. (\ref{equ_compositional_channel_with_errors}) and since we assume that $\beta$ and $\Lambda$
are constant for all the queries, we will use the following shorthand to denote this probability distribution:
\begin{equation}\label{equ_decoding_potential}
	{\displaystyle
		\Psi_{a}(x_{\partial a}) \triangleq f_{\beta}( x_{\partial a}\Lambda)
	}
\end{equation}
From Eq. (\ref{equ_decoding_prob1}-\ref{equ_decoding_potential}), 
we get:
\begin{equation}\label{equ_decoding_postpriorl}
	{\displaystyle 
		\Pr(\mathbf{G}) \propto
		\prod\displaylimits_{a=1}^{t}\Psi_{a}(x_{\partial a})
		\prod\displaylimits_{i=1}^{n}\varphi(x_{i})
%		\prod\displaylimits_{i=1}^{n}{\mathbb{B}(x_i)}
	}
\end{equation}

The factorization above is captured by factor graph with 
two types of factor nodes, $\varphi$ nodes and $\Psi$ nodes.
The $\varphi$ nodes are uniquely connected to each variable nodes, whereas the $\Psi$ nodes
are connected to the variables according to the query design in $\Phi$, so each variable node
is connected to $w$ different $\Psi$ nodes. An example of a factor graph with 12 specimens, and 
$Q=\{3,4\}$ is given in Fig. \ref{fig_factor_graph}.

\begin{figure}[!t]
\centering
\includegraphics[width=2in]{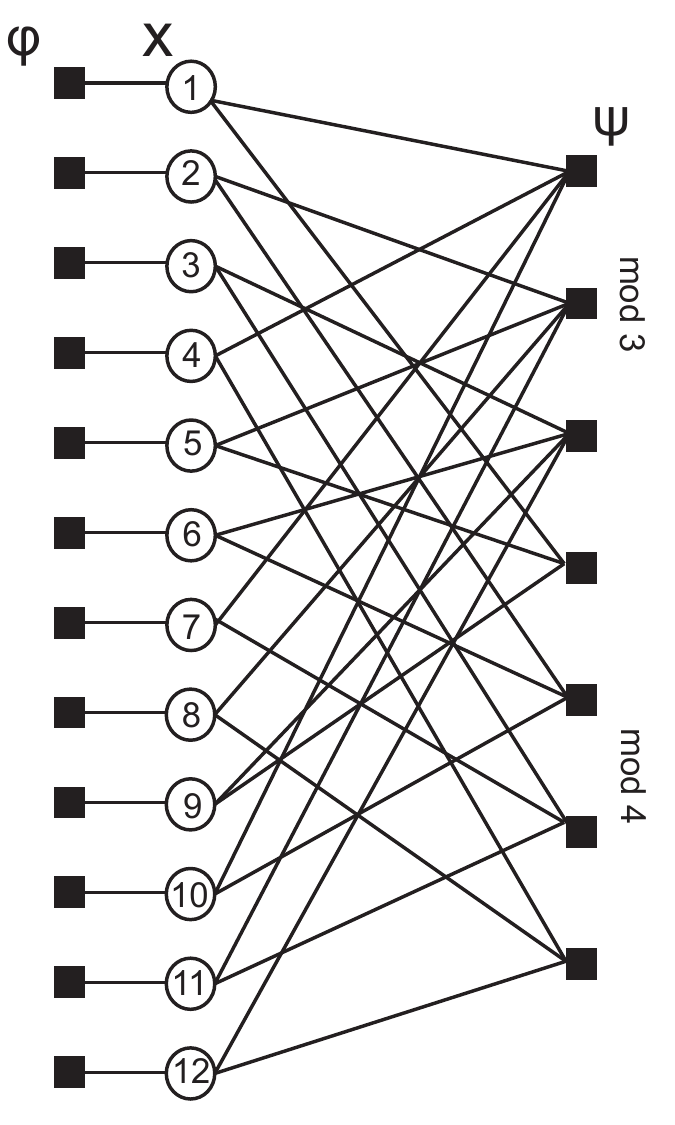}
\caption{Example of Factor Graph for Genotyping Reconstruction}
\label{fig_factor_graph}
\end{figure}

Belief propagation (sum-product algorithm)\cite{article_factor_graph, book_pearl} is a graphical inference technique that is based on exchanging messages (beliefs) between 
factor nodes and variable nodes that tune the marginals of the variable 
nodes to the observed data. When a factor graph is a tree
the obtained marginals are exact; however, a factor graph that is built according to any reasonable 
query design will always contain many loops (easily proved by the pigeonhole principle), implying that finding $G_{MAP}$ is NP-hard\cite{book_bishop}.
Surprisingly, it has been found that belief propagation can still be used as an approximation method 
for factor graphs with loops. These findings rely on the concept 
that if the local topology of a factor graph is a tree-like, 
the algorithm can still converge with high probability\cite{article_loopy_belief_propagation, book_Mezard}.
This approach has been successfully used in a broad spectrum 
of NP-hard problems including decoding LDPC codes\cite{article_factor_graph}, finding assignments in  random k-SAT problems\cite{article_ashish} and even
solving Sudoku puzzles \cite{article_sudoku}. Recently, Mezard and colleagues studied the decoding
performance of belief propagation in the prototypical problem of group testing\cite{atricle_mezard_paper}.
One advantage of their setting is the presence of 
'sure zeros' - variables nodes that are connected to 
at least one 'inactive' test node. Since the tests are faultless in the prototypical problem,
those variables are immediately decoded as 'inactive', and are stripped off from the factor graph,
which reduces the complexity of problem handed to the belief propagation.
Unfortunately, the query results from next generation sequencers are not reliable, and the absence of an allele node
from a query may stem from insufficient sequencing coverage (small $\beta$) and sequencing errors. Furthermore,
the total number of sure zeros can be very small as confounding errors may falsely indicate the 
presence of an allele in a query. From these two reasons, stripping has little applicability in our setting.
On the other hand, Baron and colleagues \cite{article_baron_belief} investigated the performance
of belief propagation for the recovery of compressed signals with a linear channel model and 
additive white Gaussian noise (AWGN). Our approach is reminiscent of their 
method, and employs belief propagation on the full graph using some essential shortcuts.

The marginal probability of $x_i$ is given by the Markov property of the factor graph:
\begin{equation}\label{equ_margin_of_x}
	{\displaystyle
		\Pr(x_i) \propto 
		\varphi(x_i)
		\prod\displaylimits_{a=1}^{w} 
		\mu_{a\rightarrow x_i}(x_i)
	}
\end{equation}
The approximation made by belief propagation in loopy graphs is that the beliefs of
the variables in the subset $\partial a \backslash x_i$ regarding $x_i$ are independent.
Since $\lambda_{max}=1$ in light-weight designs, the resulted factor graph 
does not have any short cycles of girth $4$, implying that the beliefs does not strongly correlated,
and that the assumption is approximately fullfilled.
The algorithm defines $\mu_{a\rightarrow x_i}(x_i)$ as:
\begin{equation}\label{equ_message_factor_to_node}
	{\displaystyle
	\mu_{a\rightarrow x_i}(x_i)=
	\sum\displaylimits_{\{x \in \partial a \backslash x_i\}}
	\Psi_{a}(x_{\partial a})
	\prod\displaylimits_{x_j \in \partial a \backslash x_i} \mu_{x_j\rightarrow a}(x_j)
	}
\end{equation}
and
\begin{equation}\label{equ_message_node_to_factor}
	{\displaystyle
		\mu_{x_j\rightarrow a}(x_j) =
		\varphi(x_j)
		\prod\displaylimits_{u \in \partial x_j \backslash a}
		\mu_{u \rightarrow x_j}(x_j)
	}
\end{equation}
were $u \in \partial x_j$ denotes the subset of queries with $x_j$.
Eq. (\ref{equ_message_factor_to_node}) describes message from a factor node to a variable node,
and Eq. (\ref{equ_message_node_to_factor}) describes message from a variable node to a factor node.
By iterating between the messages the marginals of the 
variable nodes are gradually obtained, and in case of successful decoding the algorithm reaches to a stable 
point, and reports $G^{*}$:
\begin{equation}\label{equ_g_star}
	{\displaystyle
		G^{*} \triangleq
		\argmax\displaylimits_{x_i} 
		\Pr(x_i)
	}
\end{equation}

This approach encouters a major obstacle - calculating the factor to node messages 
requires summing over all possible genotype configurations in the pool, which exponentially grows
with the compression level, $r_{max}$, or $\sqrt{n}$.
To circumvent that, we use Monte-Carlo sampling instead of an exact calculation to find the factor 
to node messages of each round. This is based on drawing random configurations of $x_{\partial a}$
according to the probability density functions (pdf) that are given by the
$\mu_{x_j\rightarrow a}(x_j)$ messages and evaluating $\Psi_a(x_{\partial a})$. 
An additional complication are strong oscillations in which the marginal estimation of $x_i$ for the $\tau$ step
is almost completely concentrated in one state, but at the $\tau+1$ step, the estimation is completely
concentrated in another state. One of those states is obviously wrong, and a sampling process that uses
this pdf to evaluate a factor to node message for other variable nodes may find only very 
small values of $\Psi$, which is prone to numerical stability issues that ended up in sending all-zero 
messages and failure of the algorithm.
We used message damping to attenuate the oscillations \cite{article_message_damping}. 
The damping procedure averages the variable to factor messages of the $m$ round with the message 
of the $m-1$ round:
\begin{equation}\label{equ_damping}
	{\displaystyle
	\mu_{x_j\rightarrow a}^{m(damped)}(x_j) =
	\left(\mu_{x_j\rightarrow a}^{m}(x_j)\right)^{1-\gamma}
	\left(\mu_{x_j\rightarrow a}^{m-1}(x_j)\right)^{\gamma}
	}
\end{equation}
The extent of the damping can by tuned with $\gamma \in [0,1]$. When $\gamma=1$ 
there are no updated at all, and when $\gamma=0$ we restore the algorithm in Eq. (\ref{equ_message_node_to_factor}).
Appendix \ref{app_full_lay} presents a full layout of the belief propagation reconstruction algorithm:

\subsection{Baseline Reconstruction Algorithm}
In order to benchmark the belief propagation decoding algorithm above, we introduce
an additional algorithm, named {\it pattern consistency decoding}, which is used in group testing
to reconstruct the original data from superimposed channel. In a carrier screen, the algorithm 
first creates a new matrix that is composed of the columns in $\mathbf{Y}$
that correspond to the risk alleles, and then it treats the results in the new matrix
as superimposition according to Eq.
(\ref{superimposed_degredation}). We denote the new matrix by $\mathbf{Y}_{rs}$.

This method does not address query errors, and a specimen is defined as a carrier only if all its $w$ queries indicate the presence of a risk allele:
\begin{equation}\label{equ_pattern_consistency_decoder}
	\widehat{\mathbf{E}_{risk}} = 
	\mathbb{I}(
		\mathbf{Y}^{T}_{s}
		\mathbf{\Phi}
	)
\end{equation}
where $\mathbb{I}$ is an indicator function:
\begin{equation}\label{equ_what_is_1}
 	{\displaystyle
 		\mathbb{I}(X_{ij}) =
 			\left\{ 
 				\begin{array}{rl}
 				1 & X_{ij} = w \\
 				0 & otherwise
      				\end{array}
 			\right.	
 	}.
 \end{equation}
Rows of $E_{risk}$ with positive entries indicate carriers.
This reconstruction is guaranteed to be correct if $d_0$, the maximal number of 
carriers in the screen for one of the risk alleles, is lower than $d$, the disjunction property of $\mathbf{\Phi}$, 
(given no sequencing errors and sufficient coverage).
Since this reconstruction works with degraded information compared to belief propagation, we will use it to indicate the baseline performance expected from belief propagation decoding, and to test whether the approximations we employed (loopy messages, Monte-Carlo sampling, damping) are valid.

\section{Numerical Results}\label{section_results}
\begin{figure*}[!t]
\centering
\includegraphics[width=6in]{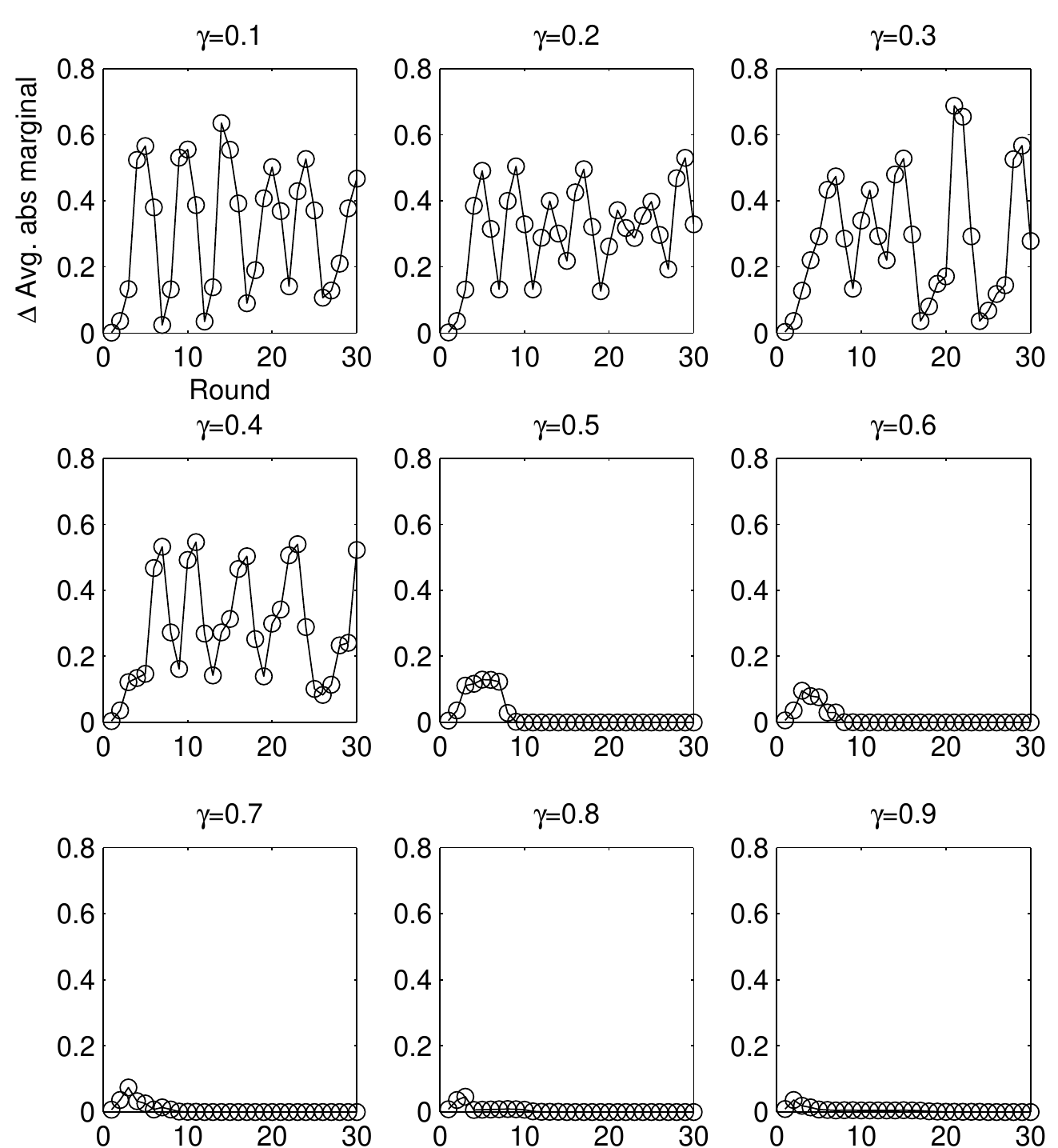}
\caption{The Effect of Damping on Oscillations}
\label{fig_gamma}
\end{figure*}

To demonstrate the power of our method, we simulated several settings where 
there is one risk allele and one WT allele in the population, with $n=1000$, $\beta = 10^{3}$, $w=5$, 
and $Q=\{33, 34, 35, 37, 41\}$, which can be accommodated in a single 
machine batch. Fig. \ref{fig_gamma} emphasizes the 
effect of damping on the belief propagation convergence rates. In this example, the 
number of carriers in the screen was $d_{0}=43$, and we ran the decoder for $30$ iterations.
We evaluated different extents of damping: $\gamma \in [0.1, \dots, 0.9]$, and we measured for each 
iteration the averaged absolute difference in the marginal from the previous step. 
We found that with $\gamma<0.5$, there are strong oscillations and the algorithm 
does not converge, whereas
 with $\gamma \geqslant 0.5$, there are no oscillations, and the algorithm converges and 
correctly decodes the genotype for all the specimens.

We also tested the performance of the reconstruction algorithms for increasing number of carriers 
in the screen, ranging from $5$ to $150$, 
with no sequencing errors  (Fig. \ref{fig_d}).
%In that simulation the number of samples for each iteration was $2500$, and we let the algorithm to run for
%$50$ cycles. 
The belief propagation reconstruction outperformed the pattern consistency decoder and reconstructed 
the genotypes with no error even when the number of carriers was $40$, which is a quite high number for 
severe genetic diseases. The ability of the belief propagation to faithfully reconstruct cases with
 $d_{0}\gg d$-disjunction of the query design is not suprising,  since d-disjunction is a conservative 
sufficient condition even for a superimposed channel.

\begin{figure}[!t]
\centering
\includegraphics[width=2.5in]{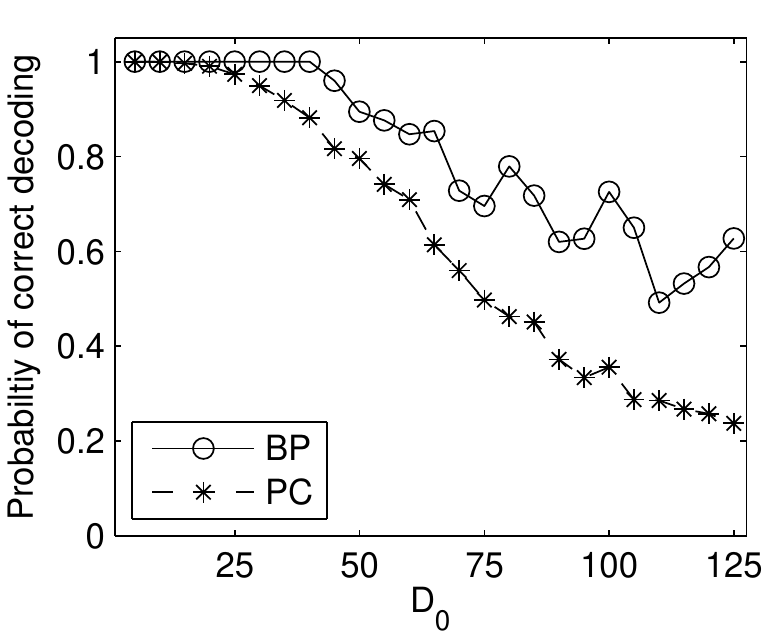}
\caption{Decodability as a Function of Number of Carriers}
\label{fig_d}
\end{figure}

We continue to evaluate the performance of the algorithm in a biologically-oriented setting - 
detecting carriers for CF $W1282X$ mutation, where the carrier rate in some populations is about $1.8\%$ \cite{article_cf_israeli}.
The relatively high rate of the carriers challenges our scheme with a difficult genetic screening
problem. Moreover, the sequence difference between the WT allele and the mutant allele is only a single
 base substitution, and sequencing error may cause genotype confounding.
To recapitulate that, we introduced increasing levels of symmetric 
confounding errors (i.e the two alleles
have the same probability of being converted from one to the other), and we tested the performance
of the reconstruction algorithms with $\beta=10^{3}$ and $\beta=10^{4}$, and with error rates in the
range of $0\%-4.5\%$ with steps of $0.5\%$ (Fig. \ref{fig_beta}).
\begin{figure}[!t]
\centering
\includegraphics[width=2.5in]{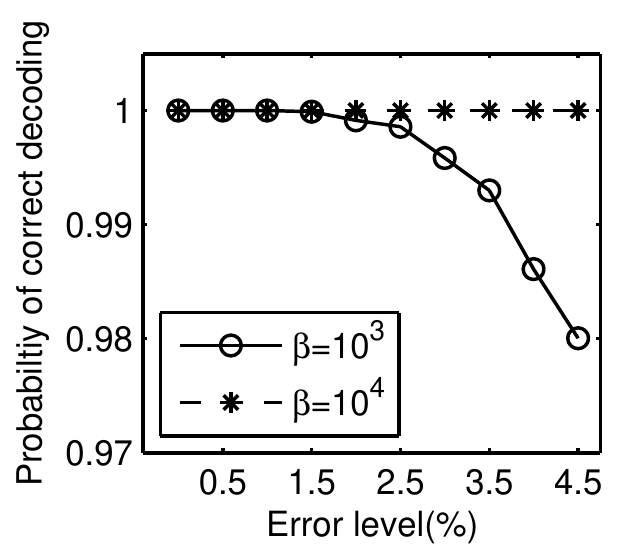}
\caption{Simulation of CF Screen - The Effect of $\beta$ and Confounding Errors}
\label{fig_beta}
\end{figure}

\begin{figure}[!t]
\centering
\includegraphics[width=2.5in]{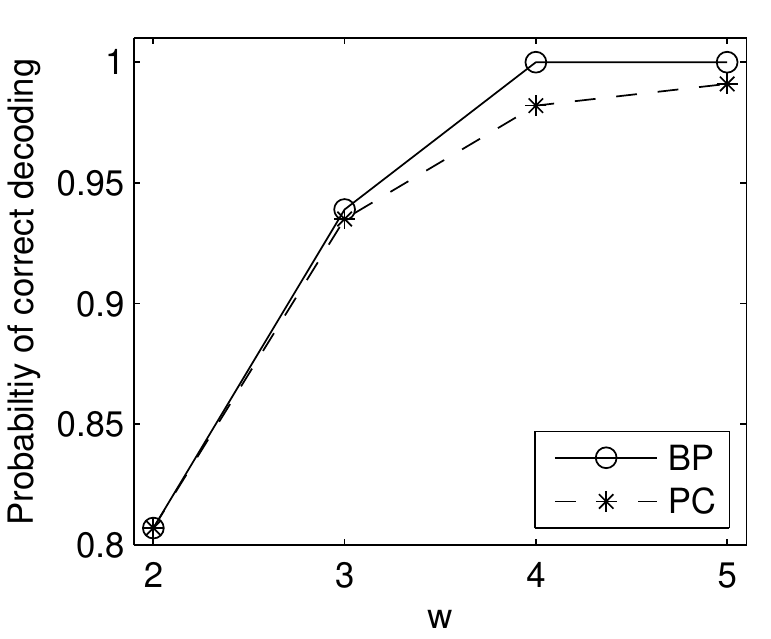}
\caption{Different Weights for CF $\Delta F508$ Screen Detection}
\label{fig_w}
\end{figure}

As expected, the pattern consistency decoder performed poorly (data not shown) even for the lowest 
error rate of $0.5\%$ and marked all specimens as carriers.
The belief propagation algorithm reported the correct genotype for all specimens even 
when the error rate was $1.5\%$ and $\beta=10^{3}$. Importantly, the decoding mistakes of the belief
propagation at higher error rates were false positives, and did not affect the sensitivity of the method.
When we increased the number of reads for each query to $\beta=10^{4}$, the belief propagation
decoder reported the genotype of all the specimens without any mistake. 
As we mentioned earlier, the expected confounding error rate for this mutation up to $1\%$, 
implying that the parameters used in the simulation are quite conservative.

We also tested another CF mutation, $\Delta F508$, which has a similar carrier rate in people with 
European descents as $W1282X$, but contains a $3$-nucleotide deletion when compared to the WT allele.
This implies that the confounding error rates are negligible, as sequencing-induced deletions are quite rare.
In this example, we evaluated the effect of different weights for the query designs, and
we used the following sets of query windows: $\{33, 34\}$, $\{33, 34, 35\}$, $\{33, 34, 35, 37\}$, 
$\{33, 34, 35, 37, 41\}$. Fig. \ref{fig_w} depicts the results for the belief propagation algorithm 
and for the pattern consistency decoder. While the results are quite poor for $w=2$, the belief propagation
 decodes correctly all the specimens with $w=4$, which would requires the synthesis of only $37$ barcodes, 
and a total of $139$ queries.

\section{CONCLUSION}\label{section_conclude}
In this paper, we presented a compressed genotyping framework that harnesses next generation 
sequencers for large scale genotyping screens of severe genetic diseases.
We formulated the problem as reconstructing
a spares bipartite multigraph from information that was obtained over a compositional channel. 
In addition to the traditional objective of minimizing the number of queries, we introduced another objective of
reducing the weight of the design, and we propose a new class of designs called light-weight designs
in which the weight does not depend on
 $n$, and only grows linearly with $d$. 
For the genotyping reconstruction part, we presented a Baysian framework that is based on 
loopy belief propagation, and we evaluated its performance by simulating different types of carrier tests,
including prevalent mutations in Cystic Fibrosis.

Further investigation is needed to expand the framework to include prior biological data such as 
familial information and other predispositions, and to include more types of errors beyond 
those introduced by sequencing, such as biased PCR amplification, query failures, and sample
contamination. In addition it will be interesting to develop a more comprehensive treatment for the
compositional channel, and to find a less conservative sufficient condition for faithful signal reconstruction.

\appendices
\section{The product maximization algorithm}\label{app_product_max}
The product maximization problem is defined as follows. Given parameters
$\kappa$, $\kappa_1$ and $w$, find the set $Q$ of size $w$ whose elements are all
in the range $\kappa<x \le \kappa_1$ and such that for no pair $x,y \in Q$ has
$lcm(x,y) \le \kappa^2$. For product maximization, typical values in practice
have $\kappa$ in the range $[100,300)$, $w$ in the range $[2,8]$ and
$\kappa_1$ fixed at $384$. The reason for this number is the number of wells in a microtiter plate,
which is compatible with liquid handling robots. 
The empirical results below relate to this entire range,
for all of which we have optimal solutions discovered by exhaustive searching.

The product maximization problem has ties to the sum minimization problem in
both bound-calculation and solving algorithms. First, note that in this
problem we cannot consider ``asymptotic'' behavior when $w$, $\kappa$ and $\kappa_1$
are large without specifying how the ratio $\frac{\kappa_1}{\kappa}$ is constrained.

If $\kappa$ is constant and $\kappa_1$ rises, the asymptotic solution will be the set
$\{\kappa_1,\kappa_1-1,\kappa_1-2,\ldots,\kappa_1+1-w\}$. 
This set clearly has the maximum possible
product, while at the same time satisfying the condition on the $lcm$ because
no two elements in the solution can have a mutual factor greater than $w$.
This value will be the optimum as soon as $(\kappa_1+1-w)(\kappa_1+2-w)>w \kappa^2$ (and
possibly even before), so $\sqrt{w}$ should be taken as an upper bound for
$\frac{\kappa_1}{\kappa}$ to form a non-trivial case.

For any specific ratio $\frac{\kappa_1}{\kappa}$, the condition $lcm(x,y) > \kappa^2$
for $x$ and $y$ values close to $\kappa_1$ is equivalent to
$gcd(x,y) < \frac{\kappa_1^2}{\kappa^2}$. This allows us to reformulate the question
as that of finding the set $Q$ with $w$ elements, all less than or equal to
$\kappa$, s.t. the $gcd$ of any pair is lower than or equal to
$\rho = \left\lfloor\frac{\kappa_1^2}{\kappa^2}\right\rfloor$.

For the product maximization problem, we redefine the discrepancy to be
$\delta=\kappa_1+1-\min(Q)$. In order to compute the asymptotic bound for this
discrepancy, let us first define
\emph{pseudo-primes}. Let the set of $k$-pseudo-primes, $P_k$, be defined as
the set s.t. $i \in P_k \iff i>k$ and $\neg \exists j<i,j \in P_k$ s.t.
$i$ is divisible by $j$. The set of $1$-pseudo-primes coincides with the set
of primes.

One interesting property of $k$-pseudo-primes is that they coincide with the
set of primes for any element larger than $k^2$. To prove this, first note
that if $i$ is a prime and $i>k$ then $i$ by definition belongs to $P_k$.
Second, note that if $i$ is composite and $i>k^2$ then $i$ has at least one
divisor larger than $k$. In particular, it must have a smallest divisor
larger than $k$, and this divisor cannot have any divisors larger than $k$,
meaning that it must belong to $P_k$. Consequently, $i \notin P_k$.

Both the reasoning yielding the upper bound and the reasoning yielding the
lower bound for the sum minimization problem utilize estimates for the density
of numbers not divisible by a prime smaller than some $d$. In order to fit
this to the product maximization problem, where a $gcd$ of $\rho$ is allowed,
we must revise these to estimates for the density of numbers not divisible by
a $\rho$-pseudo-prime smaller than $\delta$. Because the $k$-pseudo-primes and the
primes coincide beginning with $k^2$, this density is the same up to an
easy-to-calculate multiplicative constant $\gamma_k$.

Knowing this, both upper and lower bound calculations can be applied to
show that the asymptotic discrepancy of the optimal solution is on the order of
$\gamma_k w \ln(w)$. This discrepancy can be used, as before,
to predict an approximate optimal product. However, the bound on the product
is much less informative than the bound on the sum: the product can be bounded
from above by $\kappa_1^w$ and from below by $(\kappa_1-\delta)^w$, 
both converging to a ratio of 1:1 at $\kappa_1$ rises.

The revised greedy algorithm for this problem is given explicitly below.

\begin{algorithmic}[1]
\STATE Let $Q$ be the set of the $w$ largest primes $\le \kappa_1$.
\REPEAT
\STATE $\delta \gets \kappa_1+1-\min(Q)$
\STATE $Q \gets$ the $w$ largest numbers $\le \kappa_1$ that have no
factors smaller than $\delta$
\UNTIL {$\delta=\kappa_1+1-\min(Q)$}
\REPEAT
\STATE $\delta \gets \kappa_1+1-\min(Q)$
\STATE $n(x) \defeq$ the number of distinct primes smaller than $\delta$ in the
factorization of $x$.
\STATE Sort the numbers $\min(Q), \ldots, \kappa_1$ by increasing $n(x)$ [major key]
and decreasing value [minor key].
\FORALL {$i$ in the sorted list}
\IF {$\forall q \in Q, lcm(q,i)>\kappa^2$ and $i>\min(Q)$}
\STATE replace $\min(Q)$ by $i$ in $Q$.
\ELSIF {There is exactly one $q \in Q$ s.t. $lcm(q,i)\le \kappa^2$, and $i>q$}
\STATE replace $q$ with $i$ in $Q$.
\ENDIF
\ENDFOR
\UNTIL {$\delta=\kappa_1+1-\min(Q)$}
\STATE output $Q$.
\end{algorithmic}

Note that the greedy algorithm tries to lower the discrepancy of the solution
even when there is no proof that a smaller discrepancy will yield an improved
solution set. In the sum minimization problem, any change of $\Delta$ in any
of the variables yields a change of $\Delta$ in the solution, so there is
little reason to favor reducing the largest element of $Q$ (and thereby
reducing the discrepancy) over reducing any other element of $Q$. In product
maximization, however, a change of $\Delta$ to $\min(Q)$ (and hence to the
discrepancy) corresponds to a larger change to the product than a change of
$\Delta$ to any other member of $Q$. This makes the greedy algorithm even
more suited for the product maximization problem than for sum minimization.

Indeed, when examining the results of the greedy algorithm on $\kappa=384$, with
$w \in [2,8]$ and $\kappa \in [100,300)$ we see that the greedy algorithm
produces the correct result in all cases $w\in [2,3,4]$. In $w\in [6,7,8]$
the algorithm produces the optimal result in all but 2,3 and 3 cases,
respectively. The only $w$ for which a large number of sub-optimal results
was recorded is $w=5$ where the number of sub-optimal results was $49$. Note,
however, that in product maximization there is a much larger tendency for
``streaking''. The $49$ sub-optimal results all belong to a single streak,
where the optimal answer is $\{379, 380, 381, 382, 383\}$ and the answer
returned from the greedy algorithm is $\{377, 379, 382, 383, 384\}$. The
difference in the two products is approximately $0.008\%$.

In terms of streaks, the optimal answer was returned in all but one streak
in $w\in[5,6]$ and in all but two streaks in $w\in[7,8]$. In terms of the
number of iterations required, the only extra iterations that were
needed in the execution of the algorithm beyond the minimal required was a
single extra iteration through the
first ``repeat'' loop when $w$ was $3$. In all other cases, no extra iterations
were used, demonstrating that this algorithm is in practice faster than is
predicted by its (already low-degree polynomial) time complexity.

\section{Full Layout of Belief Propagation Reconstruction}\label{app_full_lay}
\begin{enumerate}
\item{\bf Inputs:} Query design $\mathbf{\Phi}$, sequencing results $\mathbf{Y}$,  prior expectations about the
genotypes prevalence $\eta$, damping parameter $\gamma$, 
number of iterations $m_{max}$, and number of Monte Carlo rounds $z$.

\item {\bf Preprocessing:} 
(a) find $\beta$ - enumerate the number of reads in the query.
(b) learn the genotype error pattern $\Lambda$ - the sequencing errors
rates are estimated using spiked-in controls \cite{article_me_nm}, and 
converted to genotype error according to the sequence of the different alleles.
(c) calculate $\varphi$ according to $\eta$.

\item {\bf Initialization} Initialize the iteration counter $m$. Initialize $\mu_{x_i\rightarrow a}(x_i)$ to 
priors in $\varphi$.

\item {\bf Send messages from factors to variables:}\label{ref_algorithm_1}
\begin{algorithmic}[1]
\FOR {each factor $a$ in $\{1, \dots, t\}$}
	\FOR {each variable $x_i$ in query $a$}
		\FOR {each state of variable $x_i$ in $\{1, \dots , |\eta|\}$}
			\STATE Set $\Psi_{0} \gets 0$
				\FOR {$\{1, \dots, z\}$ Monte-Carlo round}
					\STATE  $r \gets$ random configuration of $\partial a \backslash x$ 
					according to pdfs in $\mu_{x_j\rightarrow a}^{m}$
					\STATE  $\Psi_{0} \gets$ $\Psi_{0}+\Psi_{a}(r,$ state of $x_i)$
				\ENDFOR
				\STATE $\mu_{a\rightarrow x_i}($state of $x_i) \gets \Psi_{0}/m$
		\ENDFOR
		\STATE Normalize $\mu_{a\rightarrow x_i}(x_i)$
		\STATE Send message $\mu_{a\rightarrow x_i}(x_i)$
	\ENDFOR
\ENDFOR
\end{algorithmic}
\item {\bf Send messages from variables to factors:}
\begin{algorithmic}[1]
\FOR {each variable $x_i$ in $\{1, \dots, n\}$}
	\FOR {each factor $a$ connected to $x_i$}
		\STATE Set $\mu_{x_i\rightarrow a}^{m}(x_i)$ to all ones vector.
		\FOR {each possible state of variable $x_i$ in $\{1, \dots , |\eta|\}$}
			\FOR {each factor $j$ connected to $x_i$ except $a$}
				\STATE $\mu_{x_i\rightarrow a}^{m}($ state of $x_i)$ =
					       $\mu_{x_i\rightarrow a}^{m}($ state of $x_i)$ 
				               $\mu_{j\rightarrow x_i}($ state of $x_i)$
			\ENDFOR	
		\ENDFOR
		\STATE Include prior by 
		$\mu_{x_i\rightarrow a}^{m}(x_i) \gets \mu_{x_i\rightarrow a}^{m}(x_i)\varphi(x_i)$
		\STATE Damp $\mu_{x_i\rightarrow a}^{m}(x_i)$
		\STATE Normalize $\mu_{x_i\rightarrow a}^{m}(x_i)$
		\STATE Send message $\mu_{x_i\rightarrow a}^{m}(x_i)$
	\ENDFOR
\ENDFOR
\STATE $m \gets m +1$
\end{algorithmic}
Go back to step \ref{ref_algorithm_1} if $m < m_{max}$.

\item {\bf Marginalize:} For every variable node compute the marginal according to Eq.
(\ref{equ_margin_of_x}), and find the state of the variable with the highest probability.

\item {\bf Report:} Report the highest state of each variable and construct $\mathbf{G}$.
\end{enumerate}

% use section* for acknowledgement
\section*{Acknowledgment}
The authors thank Oded Margalit and Oliver Tam for useful comments.
Y.E is a Goldberg-Lindsay Fellow and ACM/IEEE Computer Society High Performance  Computing
PhD Fellow of the Watson School of Biological Sciences. G.J.H is an investigator of the Howard Hughes
Medical Institute. P.P.M is a Crick-Clay Professor.

\bibliographystyle{hieeetr}
\bibliography{compressed_sensing}
\end{document}